\begin{document}

\newtheorem{theorem}{Theorem}[section]
\newtheorem{lemma}{Lemma}[section]
\newtheorem{corollary}{Corollary}[section]
\newtheorem{claim}{Claim}[section]
\newtheorem{proposition}{Proposition}[section]
\newtheorem{definition}{Definition}[section]
\newtheorem{fact}{Fact}[section]
\newtheorem{example}{Example}[section]

\newtheorem{remark}{Remark}[section]

\newcommand{\proof}{\em Proof: \em}

\newcommand{\cA}{{\cal A}}
\newcommand{\cC}{{\cal C}}
\newcommand{\cG}{{\cal G}}
\newcommand{\cN}{{\cal N}}
\newcommand{\cU}{{\cal U}}
\newcommand{\cT}{{\cal T}}
\newcommand{\cS}{{\cal S}}
\newcommand{\cL}{{\cal L}}
\newcommand{\cV}{{\cal V}}
\newcommand{\loc}{{\cal LOCAL}}

\newcommand{\ai}{\alpha_i}
\newcommand{\bi}{\beta_i}
\newcommand{\gi}{\gamma_i}
\newcommand{\di}{\delta_i}

\newcommand{\oai}{\overline{\alpha}_i}
\newcommand{\obi}{\overline{\beta}_i}
\newcommand{\ogi}{\overline{\gamma}_i}
\newcommand{\odi}{\overline{\delta}_i}

\newcommand{\algo}[1]{
\medskip
\noindent \textbf{Algorithm {\tt #1}}\\
\nopagebreak}

\newcommand{\procedure}[1]{
\medskip
\noindent \textbf{Procedure {\tt #1}}
\nopagebreak \\}

\newcommand{\procend}{\hfill $\blacksquare$\medskip}

\newcommand{\WLE} {{\tt Weak Leader Elect}}
\newcommand{\SLE} {{\tt Strong Leader Elect}}
\newcommand{\TLE} {{\tt Tree Leader Elect}}
\newcommand{\MT} {{\tt Move Token}}




\title{Knowledge, Level of Symmetry, and Time of Leader Election
\thanks{A preliminary version of this paper appeared in the Proc. of the 20th Annual European Symposium on Algorithms (ESA 2012), LNCS 7501, 479--490.}}

\author[1]{ Emanuele G. Fusco}

\affil[1]{Department of Computer, Control, and Management Engineering ``Antonio Ruberti''
Sapienza, University of Rome.
{\tt fusco@diag.uniroma1.it}}

\author{Andrzej Pelc\thanks{Andrzej Pelc was partially supported by NSERC discovery grant 
and by the Research Chair in Distributed Computing at the
Universit\'e du Qu\'ebec en Outaouais.}}

\affil[1]{D\'epartement d'informatique, Universit\'e du Qu\'ebec en Outaouais, Gatineau,
Qu\'ebec J8X 3X7, Canada.
{\tt pelc@uqo.ca}}

\maketitle

\thispagestyle{empty}

\begin{abstract}
We study the time needed for deterministic leader election in the $\loc$ model, where in every round
a node can exchange any messages with its neighbors and perform any local computations.
The topology of the network is unknown and nodes are unlabeled, but ports at each node have arbitrary fixed labelings which,
together with the topology of the network,
 can create asymmetries to be exploited in leader election.
We consider two versions  of the leader election problem: strong LE in which exactly one leader has to be elected, if this is possible, while
all nodes must terminate  declaring that leader election is impossible otherwise,
 and weak LE, which differs from strong LE in that no requirement on the behavior of nodes is imposed,
if leader election is impossible. 
We show that the time of leader election depends on three parameters of the network: its diameter $D$,  its size $n$, and its {\em level of symmetry} $\lambda$,
which, when leader election is feasible, is the smallest depth at which some node has a unique view of the network. 
It also depends on the  knowledge by the nodes, or lack of it,
of parameters $D$ and $n$. 

\vspace*{0.5cm}
\noindent 
{\bf keywords:} leader election,  anonymous network, level of symmetry

\end{abstract}


\section{Introduction}

\subsection{The model and the problem}

Leader election is one of the fundamental problems in distributed computing, first stated in \cite{LL}. Every node
of a network has a boolean variable initialized to 0 and, after the election, exactly one
node, called the {\em leader}, should change this value to 1. All other nodes should know which node is the leader.
If nodes of the network have distinct labels, then leader election is always possible
(e.g., the node with the largest label can become a leader).
However, nodes may refrain from revealing their identities, e.g., for security or privacy reasons.
Hence it is desirable to have leader election algorithms that do not rely on node identities but exploit
asymmetries of the network due to its topology and to port labelings. With unlabeled nodes, leader election is impossible in symmetric networks{~\cite{Ang80}}.

A network is modeled as an undirected connected graph.
We assume that nodes are unlabeled, but ports at each node have arbitrary fixed labelings $0,\dots,$ $d-1$, where $d$ is the degree of the node. Throughout the paper, we will use the term ``graph'' to mean a graph with the above properties.  
We do not assume any coherence between port labelings at various nodes. Nodes can read the port numbers.  When sending a message through port
$p$, a node adds this information to the message, and when receiving a message through port $q$, a node is aware that this message came through this port. 
The topology of the network is unknown to the nodes, but depending on the 
specification of the problem, nodes may know some numerical parameters of the network, such as the number $n$ of nodes (size), and/or the diameter $D$.
We consider two versions  of the leader election problem (LE): \emph{strong LE} and \emph{weak LE}.
In strong LE one leader has to be elected whenever this is possible, while all nodes must terminate  declaring that leader election is impossible otherwise.
Weak LE differs from strong LE in that no requirement on the behavior of nodes is imposed, if leader election is impossible. In both cases, upon election of the leader, every non-leader
is required to know a path (coded as a sequence of ports) from it to the leader.

In this paper we investigate the time of leader election in the extensively studied $\loc$ model \cite{Pe}. In this model, communication proceeds in synchronous \\
rounds and all nodes start simultaneously. In each round each node
can exchange arbitrary messages with all its neighbors and perform arbitrary local computations. 
The time of completing a task is the number of rounds it takes.
{In particular, the time of weak LE is the number of rounds required by the last node to elect a leader if leader election is possible, and the time of strong LE is the number of rounds required by the last node to elect a leader if this is possible, and to terminate by declaring that this is impossible, otherwise.
}

It should be observed that the synchronous process of the $\loc$ model can be simulated in an asynchronous network, 
by defining for each node separately its asynchronous round $i$
in which it performs local computations, then sends messages stamped $i$ to all neighbors, and  waits until getting messages stamped $i$ from all neighbors.
(To make this work, every node is required to send a message with all consecutive stamps, until termination, possibly empty messages for some stamps.)
All our results concerning time can be translated for asynchronous networks by replacing ``time of completing a task''  by ``the maximum number of asynchronous rounds  to complete it, taken over all nodes''.

If nodes have distinct labels, then time $D$ in the $\loc$ model is enough to solve any problem solvable on a given network, as after this time all nodes have an exact
copy of the network. By contrast, in our scenario of unlabeled nodes, time $D$ is often not enough, for example to elect a leader, even if this task is feasible.
This is due to the fact that  after time $t$ each node learns only all paths of length $t$ originating at it and coded as sequences of port numbers. This is far less information
than having a picture of the radius $t$ neighborhood. A node $v$ may not know if two paths originating at it have the same other endpoint or not. It turns out
that these ambiguities may force time much larger than $D$ to accomplish leader election.

We show that the time of leader election depends on three parameters of the network: its diameter $D$,  its size $n$, and its 
{\em level of symmetry} $\lambda$.
The latter parameter is defined for any network (see Section 2 for the formal definition) and,
if leader election is feasible, this is the smallest depth at which some node has a unique view of the network. The view at depth $t$ from a node
(formally defined in Section 2) is the tree of all paths of length $t$ originating at this node, coded as sequences of port numbers on these paths.
It is the maximal information a node can gain after $t$ rounds of communication in the $\loc$ model. 

It turns out that the time of leader election also crucially depends on the {\em knowledge} of parameters $n$ and/or $D$ by the nodes. On the other hand, it does not depend on {\em knowing} $\lambda$, although it often depends on this parameter as well.

\subsection{Our results}
Optimal time of weak LE is shown to be $\Theta(D+\lambda)$, if either $D$ or $n$ is known to the nodes. 
More precisely, we give two algorithms, one working for the class of networks of given diameter $D$ and the other for the class of networks of given
size $n$, that elect a leader in time $O(D+\lambda)$ on networks with diameter $D$ and level of symmetry $\lambda$, whenever election is possible. 
Moreover, we prove that this complexity cannot be improved.  We show, for any 
values $D$ and $\lambda$, a network of diameter $D$ and level of symmetry $\lambda$ on which leader election is possible but takes time at least $D+\lambda$,
even when $D$, $n$ and $\lambda$ are known.
If neither $D$ nor $n$ is known, then even weak LE is impossible \cite{YK3}.

\begin{table*}
\begin{center}
   \begin{tabular}{cccccc}
  \cline{3-6} 
 \multicolumn{2}{c}{} &\multicolumn{4}{|c|}{ {\sc KNOWLEDGE} } \cr \cline{3-6} 
\multicolumn{2}{c}{} & \multicolumn{1}{|c}{\sc none} & \multicolumn{1}{|c}{\sc diameter} & \multicolumn{1}{|c}{\sc size} & \multicolumn{1}{|c|}{\sc diameter \& size} \\ \hline
\multicolumn{1}{|c}{\multirow{4}{*}{\begin{sideways}{\sc TASK}\end{sideways}}}
&\multicolumn{1}{|c}{\multirow{2}{*}{{WLE}}} &

	\multicolumn{1}{|c}{{\sc impossible}} &  \multicolumn{3}{|c|}{$\Theta(D+\lambda)$}\\
\multicolumn{1}{|c}{} & \multicolumn{1}{|c}{} &
	\multicolumn{1}{|c}{{\sc \cite{YK3}}} &  \multicolumn{3}{|c|}{(fast)}
	\cr  \cline{2-6}

\multicolumn{1}{|c}{}&\multicolumn{1}{|c}{\multirow{2}{*}{{SLE}}} &
	\multicolumn{2}{|c}{\multirow{2}{*}{{\sc impossible}}} &  \multicolumn{1}{|c}{$\Theta(n)$}
	& \multicolumn{1}{|c|}{$\Theta(D+\lambda)$} \\
	
\multicolumn{1}{|c}{} & \multicolumn{1}{|c}{} &	
	\multicolumn{2}{|c}{} &  \multicolumn{1}{|c}{(slow)}
	& \multicolumn{1}{|c|}{(fast)} 
	\\ \hline	
\\
\end{tabular}
\caption{\label{summary}Summary of the results}
\end{center}
\end{table*}

For strong LE, we show that knowing only $D$ is insufficient to perform it.  Then we prove that, if only $n$ is known, then
optimal time is $\Theta(n)$. We give an algorithm working for the class of networks of given
size $n$, which performs strong LE in time $O(n)$ and we show, for arbitrarily large $n$, a network $G_n$ of size $n$, diameter $O(\log n)$ and
level of symmetry 0, such that any algorithm
performing correctly strong LE on all networks of size $n$ must take time $\Omega(n)$ on $G_n$. Finally, 
if both $n$ and $D$ are known, then optimal time is $\Theta(D+\lambda)$. Here we give an algorithm, working for the class of networks of
given size $n$ and given diameter $D$ which performs strong LE in time $O(D+\lambda)$ on networks with level of symmetry $\lambda$.
In this case the matching lower bound carries over from our result for weak LE. Table~\ref{summary} gives a summary of our results. The main difficulty of this study
is to prove lower bounds on the time of leader election, showing that the complexity of the proposed algorithms cannot be improved for any of the considered scenarios. 

The comparison of our results for various scenarios shows two exponential gaps. The first is between the time of strong and weak LE.
When only the size $n$ is known, strong LE takes time $\Omega(n)$ on some graphs of logarithmic diameter and level of symmetry 0, while weak LE is accomplished in time $O(\log n)$
for such graphs. The second exponential gap is for the time of strong LE, depending on the knowledge provided.
While knowledge of the diameter alone does not help to accomplish strong LE, when this knowledge is added to the knowledge of the size,
it may exponentially decrease the time of strong LE. Indeed, strong LE with the knowledge of $n$ alone takes time $\Omega(n)$
on some graphs of logarithmic diameter and level of symmetry 0,
while strong LE with the knowledge of $n$ and $D$ is accomplished in time $O(\log n)$, for such graphs.

\subsection{Related work}
Leader election was first studied for rings, under the assumption that all labels are distinct.
A synchronous algorithm, based on comparisons of labels, and using
$O(n \log n)$ messages was given in \cite{HS}. It was proved in \cite{FL} that
this complexity is optimal for comparison-based algorithms. On the other hand, the authors showed
an algorithm using a linear number of messages but requiring very large running time.
An asynchronous algorithm using $O(n \log n)$ messages was given, e.g., in \cite{P} and
the optimality of this message complexity was shown in \cite{B}. Deterministic leader election in radio networks has been studied, e.g., 
in \cite{JKZ,KP,NO} and randomized leader election, e.g., in \cite{Wil}.

Many authors \cite{ASW,AtSn,BV,DKMP,Kr,KKV,Saka,YK,YK3} studied various computing
problems in anonymous networks. In particular, \cite{BSVCGS,YK3} characterize networks in which
leader election can be achieved when nodes are anonymous. In \cite{YK2} the authors study
the problem of leader election in general networks, under the assumption that labels are
not unique. They characterize networks in which this can be done and give an algorithm
which performs election when it is feasible. They assume that the number of nodes of the
network is known to all nodes. In
 \cite{FKKLS}  the authors
study feasibility and message complexity of sorting and leader election in rings with
nonunique labels, while in \cite{DP} the authors provide algorithms for the
generalized leader election problem in rings with arbitrary labels,
unknown (and arbitrary) size of the ring and for both
synchronous and asynchronous communication. In \cite{HKMMJ} the leader election problem is
approached in a model based on mobile agents.
Characterizations of feasible instances for leader election and naming problems have been provided in~\cite{C,CMM,CM}.
Memory needed for leader election in unlabeled networks has been studied in \cite{FP}. To the best of our knowledge, the problem
of time of leader election in arbitrary unlabeled networks has never been studied before.

\section{Preliminaries}\label{sec:preliminaries}

We say that leader election is {\em possible} for a given graph, or that this graph is a {\em solvable graph}, 
if there exists an algorithm which performs LE for this graph.

We consider two versions of the leader election task for a class $\cC$ of graphs :\\
$\bullet$
Weak LE.
Let $G$ be any graph in class $\cC$.  If leader election is possible for the graph $G$, then 
it is accomplished.\\
$\bullet$
Strong LE.  
Let $G$ be any graph in class $\cC$.  If leader election is possible for the graph $G$, then 
it is accomplished.
Otherwise, all nodes eventually declare that the graph is not solvable and stop.


Hence weak LE differs from strong LE in that, in the case of impossibility of leader election, no restriction on the behavior of nodes is imposed:
they can, e.g.,  elect different leaders, or no leader at all, or the algorithm may never stop.




We will use the following notion from \cite{YK3}. Let $G$ be a graph and $v$ a node of $G$. 
{
\begin{definition}[View]
The {\em view} from $v$ is the infinite rooted tree $\cV(v)$ with labeled ports, defined recursively as follows. $\cV(v)$ has the root $x_0$ corresponding to $v$. For every node $v_i$, $i=1,\dots ,k$, adjacent to $v$, 
there is a neighbor $x_i$ in $\cV(v)$ such that the port number at $v$ corresponding to edge $\{v,v_i\} $ is the same as the port number 
at $x_0$ corresponding to edge $\{x_0,x_i\}$,
and the port number at $v_i$ corresponding to edge $\{v,v_i\} $ is the same as the port number at $x_i$ corresponding to edge $\{x_0,x_i\}$. 
Node $x_i$, for $i=1,\dots ,k$, 
is now the root of the view from $v_i$.
\end{definition}}

\begin{figure}
\begin{center}
\scalebox{0.3}{
\input{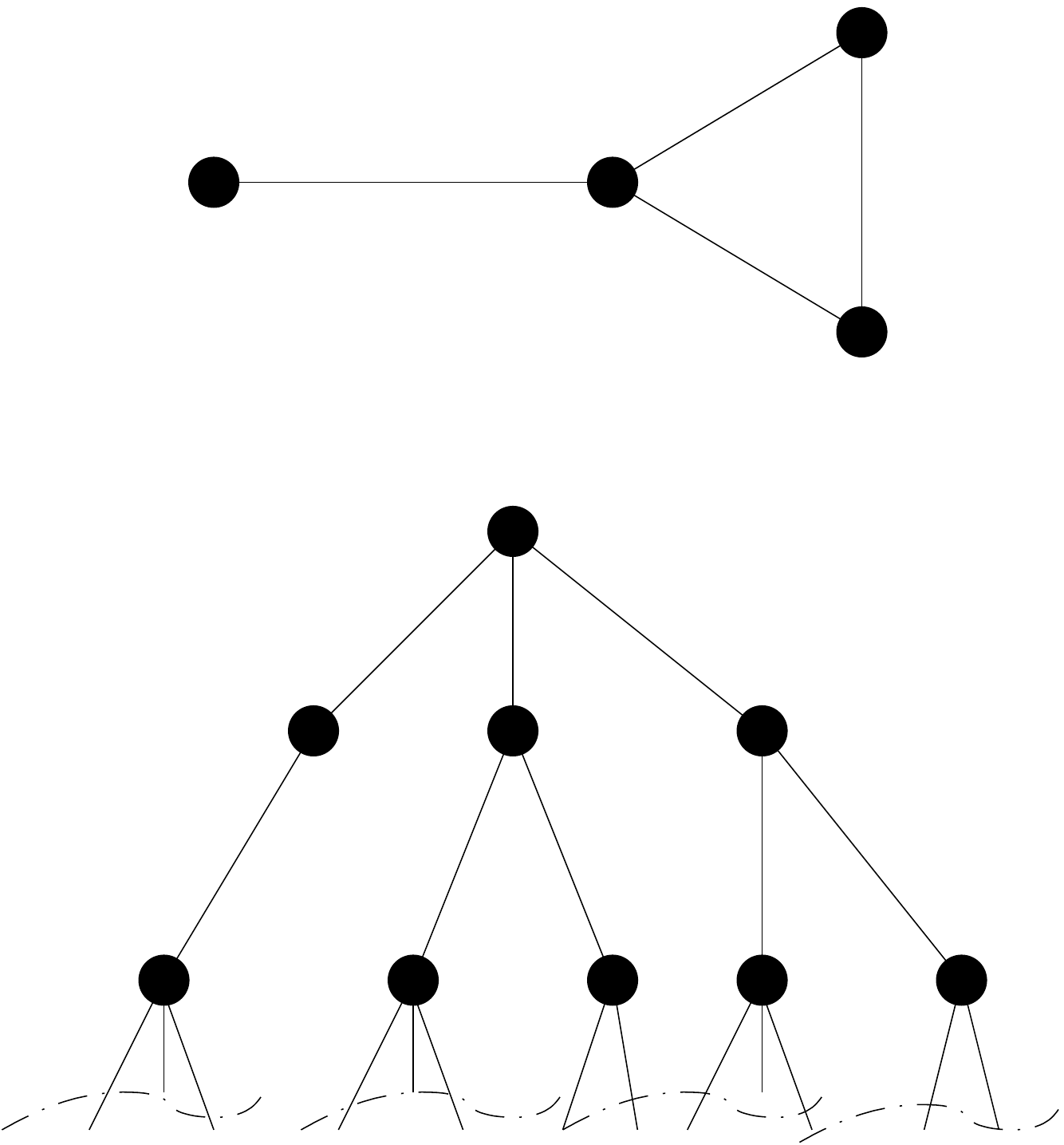_t}}
\caption{\label{fig.view} A graph and a fragment of the corresponding view from its degree 3 node.}
\end{center}
\end{figure}

{See Fig.~\ref{fig.view} for an example of a view.}
   The following proposition
directly follows from \cite{YK3} and expresses the feasibility of leader election in terms of views.



\begin{proposition}\label{possible}
Let $G$ be a graph. The following conditions are equivalent:\\
1. Leader election is possible in $G$;\\ 
2. Views of all nodes are different;\\
3. There exists a node with a unique view.
\end{proposition}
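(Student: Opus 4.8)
The plan is to prove the chain of implications $2 \Rightarrow 3 \Rightarrow 1 \Rightarrow 2$, or equivalently establish each direction separately, leaning on the characterization of leader election feasibility from \cite{YK3}.

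First I would dispose of the implications $2 \Rightarrow 3$ and $3 \Rightarrow 2$ that concern only the combinatorics of views. The implication $2 \Rightarrow 3$ is not literally true as stated for an arbitrary finite graph with infinitely many nodes of distinct views --- but the graph is finite, so ``all views are different'' means the map $v \mapsto \cV(v)$ is injective on a finite set, hence trivially some node (in fact every node) has a unique view; this is immediate. For $3 \Rightarrow 2$ I would invoke the standard fact (from \cite{YK3}, see also the theory of views) that the relation ``$\cV(u) = \cV(v)$'' is an equivalence relation whose classes all have the same cardinality; consequently if some class is a singleton, every class is a singleton, so all views are different. I would state this quotient-structure fact explicitly as the key lemma driving the equivalence of conditions 2 and 3.

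Next I would handle $2 \Rightarrow 1$. Assuming all views are distinct, I would describe the election algorithm: every node runs the $\loc$ model, and after $t$ rounds it knows its truncated view $\cV^t(v)$ (the tree of port-labeled paths of length $t$). Since there are finitely many nodes and all (infinite) views are pairwise distinct, there is a finite depth $t^*$ at which all truncated views $\cV^{t^*}(v)$ are already pairwise distinct --- this uses a compactness/pigeonhole argument on the finitely many pairs $u \neq v$, picking for each pair a depth at which their views first differ and taking the maximum. Nodes do not know $t^*$ a priori, but since the graph is finite one can have nodes exchange and reconstruct the whole graph-with-views: after enough rounds each node can certify, from the collection of truncated views it has gathered, that the truncated views have stabilized and are all distinct. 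Then each node orders the set of views by some fixed canonical linear order on port-labeled trees and declares the node with the smallest view to be the leader; by distinctness this is well-defined and consistent across all nodes, and each non-leader can recover a port-path to the leader from the reconstructed structure. This shows leader election is possible.

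Finally, $1 \Rightarrow 3$ is the genuinely substantive direction and I expect it to be the main obstacle. Here I would argue the contrapositive: if no node has a unique view, then $\cV(u) = \cV(v)$ for some $u \neq v$, and I would use the classical indistinguishability argument --- two nodes with the same view cannot be distinguished by any deterministic algorithm in the $\loc$ model, because one can build, via the universal cover / fibered-product construction on the graph, a run in which $u$ and $v$ have identical message histories at every round, so any algorithm that elects $u$ must also elect $v$, contradicting uniqueness of the leader. This is exactly the content of the impossibility direction in \cite{YK3}, so strictly speaking I would just cite it; the only care needed is to note that our leader-election specification also demands that every non-leader know a port-path to the leader, and to check that this extra requirement does not change the feasibility characterization --- it does not, since the path can always be extracted once a leader is consistently identified. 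Assembling the three implications (with condition 2 and condition 3 being equivalent for finite graphs, and each implying and implied by condition 1) completes the proof.
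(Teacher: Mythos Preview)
Your proposal is correct and mirrors the paper's own treatment: the paper does not prove this proposition in detail but states that it follows directly from \cite{YK3}, citing proposition~4 there for the impossibility direction (two nodes with equal views cannot be distinguished) and corollary~1 for the equal-cardinality-of-view-classes fact giving $2 \Leftrightarrow 3$---precisely the two ingredients you invoke. Your $2 \Rightarrow 1$ sketch is slightly loose on termination detection, but since ``LE is possible in $G$'' here means existence of an algorithm tailored to the specific graph $G$, the required depth can simply be hard-coded and the argument stands.
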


 {In particular, the impossibility of deterministic leader election,
 when there are two nodes with identical views, is implied by proposition 4 from~\cite{YK3},  while corollary~1 from~\cite{YK3} implies that one node has a unique view if and only if all views of the nodes are different.}

By $\cV^t(v)$ we denote the view $\cV(v)$ truncated to depth $t$.
We call it the view of $v$ at depth $t$. In particular, $\cV^0(v)$ consists of the node $v$, together with its degree.
The following proposition was proved in \cite{Norris}.

\begin{proposition}\label{trunc}
For a $n$-node graph,
$\cV(u)=\cV(v)$, if and only if $\cV^{n-1}(u)=\cV^{n-1}(v)$. 
\end{proposition}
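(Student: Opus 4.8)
The plan is to prove Proposition~\ref{trunc} by a standard "pumping" argument on views: if two views agree to depth $n-1$, then we can propagate the agreement to all depths by induction, using the fact that an $n$-node graph has only $n$ distinct views and hence agreement at depth $t$ already forces agreement at depth $t+1$ once $t$ is large enough. Concretely, one direction is trivial: if $\cV(u)=\cV(v)$ then certainly $\cV^{n-1}(u)=\cV^{n-1}(v)$, since truncation of equal trees gives equal trees. So the content is the converse.

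For the converse, I would first set up notation for when two views are equal at a given depth, and record the key monotonicity/locality fact: for any nodes $x,y$, if $\cV^{t+1}(x)=\cV^{t+1}(y)$, then there is a port-preserving bijection between the neighbors of $x$ and the neighbors of $y$ such that corresponding neighbors $x'$, $y'$ satisfy $\cV^{t}(x')=\cV^{t}(y')$; conversely, such a neighbor-matching at depth $t$ implies equality at depth $t+1$. This is immediate from the recursive definition of the view. Next I would consider the equivalence relation $\equiv_t$ on the vertex set of $G$ defined by $x\equiv_t y$ iff $\cV^t(x)=\cV^t(y)$. The partitions induced by $\equiv_0,\equiv_1,\equiv_2,\dots$ form a refining sequence (each $\equiv_{t+1}$ refines $\equiv_t$, by the locality fact above). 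Since $G$ has $n$ nodes, a strictly refining chain of partitions can have length at most $n$; hence there is some $t^\ast \le n-1$ with $\equiv_{t^\ast}\ =\ \equiv_{t^\ast+1}$.

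The crucial step is then to show that once the partition stabilizes at one step it stabilizes forever: if $\equiv_{t}\ =\ \equiv_{t+1}$ then $\equiv_{t+1}\ =\ \equiv_{t+2}$, and inductively $\equiv_t\ =\ \equiv_s$ for all $s\ge t$. This follows because $\cV^{t+2}(x)=\cV^{t+2}(y)$ is equivalent to the existence of a port-preserving neighbor-bijection with $\cV^{t+1}(x')=\cV^{t+1}(y')$ for matched neighbors; but by the stabilization hypothesis $\cV^{t+1}(x')=\cV^{t+1}(y')$ is equivalent to $\cV^{t}(x')=\cV^{t}(y')$, which in turn is exactly the matching witnessing $\cV^{t+1}(x)=\cV^{t+1}(y)$. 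So $\equiv_{t+2}\ =\ \equiv_{t+1}$. Combining: if $\cV^{n-1}(u)=\cV^{n-1}(v)$, then $u\equiv_{n-1}v$; since the refining chain of length-$n$ partitions must stabilize by step $n-1$, we get $u\equiv_{t}v$ for the stabilization point $t^\ast\le n-1$, hence $u\equiv_s v$ for every $s$, i.e. $\cV^s(u)=\cV^s(v)$ for all $s$, which means $\cV(u)=\cV(v)$.

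I expect the main obstacle to be purely bookkeeping rather than conceptual: making the "neighbor-bijection" characterization of $\cV^{t+1}$ equality precise (it must be a bijection $\sigma$ between incident ports, not just between neighbor vertices, and it must simultaneously respect the port label seen from the other endpoint), and then verifying carefully that the stabilization argument uses this characterization correctly at each depth. One should also be slightly careful that the refining chain argument gives stabilization within $n-1$ steps (the partition at depth $0$ has at least one block, each strict refinement adds at least one block, and the finest possible partition into singletons has $n$ blocks, so a strictly refining chain $\equiv_0 \supsetneq \equiv_1 \supsetneq \cdots$ has at most $n-1$ proper refinement steps, hence stabilizes by index $n-1$). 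None of this is deep, but it is the part where an error would most easily creep in.
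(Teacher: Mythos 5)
Your proof is correct and follows essentially the intended argument: the paper does not prove Proposition~\ref{trunc} itself but cites \cite{Norris}, and your argument is exactly that standard one — the depth-$t$ view partitions form a refining chain on $n$ nodes that must stabilize by index $n-1$, and once stabilized it never changes again (your "stabilizes forever" step is precisely the paper's Proposition~\ref{stop}, also taken from \cite{Norris}). The neighbor-matching characterization of depth-$(t+1)$ equality that you flag as the delicate bookkeeping point is in fact forced, since the ports at a node are distinct labels $0,\dots,d-1$ and the matching must pair the port-$i$ edge at $x$ with the port-$i$ edge at $y$ while agreeing on the port number at the far endpoint, so no gap arises there.
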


For graphs of sublinear diameter a better bound than that of Proposition~\ref{trunc} has been
proved by Hendrickx~\cite{H13} after the publication of the conference version of this paper; for an $n$-node graph of diameter $d$, $\cV(u)=\cV(v)$, if and only if $\cV^{k}(u)=\cV^{k}(v)$, where $k\in O(d+d\log(n/d))$.

Define the following equivalence relations on the set of nodes of a graph.
$u\sim v$ if and only if $\cV(u)=\cV(v)$, and $u\sim_t v$ if and only if $\cV^t(u)=\cV^t(v)$.
Let $\Pi$ be the partition of all nodes into equivalence classes of $\sim$, and $\Pi_t$ the corresponding partition for  $\sim_t$.
It was proved in \cite{YK3} {(corollary~1)} that all equivalence classes in $\Pi$ are of equal size $\sigma$. In view of Proposition \ref{trunc}
this is also the case for $\Pi_t$, where $t\geq n-1$. On the other hand, for smaller $t$, equivalence classes in $\Pi_t$ 
may be of different sizes.  
Every equivalence class in  $\Pi_t$ is a union of some equivalence classes in  $\Pi_{t'}$, for $t<t'$.  
The following result was proved in \cite{Norris} { (lemma~1)}. It says that if the sequence of partitions $\Pi_t$ stops changing at some point, it will never change again. 

\begin{proposition}\label{stop}
If $\Pi_t=\Pi_{t+1}$, then $\Pi_t=\Pi$.
\end{proposition}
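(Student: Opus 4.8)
The plan is to exhibit a single ``one-round refinement'' operator $F$ acting on partitions of the node set of $G$ such that $\Pi_{t+1}=F(\Pi_t)$ for every $t$, where $F=F_G$ depends only on the graph (its port structure) and not on $t$. The key observation is a recursive characterization of $\sim_{t+1}$ in terms of $\sim_t$: for nodes $u,v$ we have $u\sim_{t+1}v$ if and only if $\deg(u)=\deg(v)$ and, for every port number $p\in\{0,\dots,\deg(u)-1\}$, writing $u_p$ (resp. $v_p$) for the neighbor of $u$ (resp. $v$) reached through port $p$ and $q$ (resp. $q'$) for the port number at $u_p$ (resp. $v_p$) leading back along that edge, one has $q=q'$ and $u_p\sim_t v_p$. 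This is immediate from the recursive definition of the view: $\cV^{t+1}(u)$ is just the root together with its degree and, for each port $p$, the return port number paired with the subtree $\cV^t(u_p)$. Since the right-hand side of this equivalence refers only to the $\sim_t$-classes of the neighbors and to the fixed port labeling of $G$, the partition $\Pi_{t+1}$ is obtained from $\Pi_t$ by one and the same operator $F$.

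Granting this, the proposition follows quickly. Assume $\Pi_t=\Pi_{t+1}$. Then $\Pi_{t+1}=F(\Pi_t)=F(\Pi_{t+1})=\Pi_{t+2}$, and an immediate induction gives $\Pi_{t+s}=\Pi_t$ for all $s\geq 0$. Now choose $s$ large enough that $t+s\geq n-1$. By Proposition~\ref{trunc}, $\cV^{n-1}$ already determines $\cV$, so $\sim_{t+s}$ coincides with $\sim$; that is, $\Pi_{t+s}=\Pi$ (truncating a view to any depth at least $n-1$ loses no information about whether two views are equal). Combining, $\Pi_t=\Pi_{t+s}=\Pi$, as claimed.

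Two minor points deserve care in the write-up. First, one should record the standard monotonicity fact that $\Pi_{t+1}$ always refines $\Pi_t$: this is what makes the equation ``$\Pi_t=\Pi_{t+1}$'' meaningful and confirms that $(\Pi_t)_t$ is a genuine refinement chain; it follows because $\cV^{t+1}(u)=\cV^{t+1}(v)$ implies $\cV^{t}(u)=\cV^{t}(v)$ by further truncation. Second, the direction ``$u\sim v\Rightarrow u\sim_t v$'' used when identifying $\Pi_{t+s}$ with $\Pi$ is just the fact that $\cV(u)=\cV(v)$ forces all truncations to agree, while the reverse direction at depth $n-1$ is exactly Proposition~\ref{trunc}.

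The main obstacle is making the recursive characterization of $\sim_{t+1}$ fully rigorous, in particular getting the port bookkeeping right (the view records, for each outgoing port $p$ at the root, the port number at the \emph{other} endpoint of that edge, and this pairing of a port with a return port and a depth-$t$ subtree must match for $u$ and $v$), and then arguing cleanly that the resulting rule is a function of $\Pi_t$ alone. Once that operator viewpoint is in place, the stabilization argument is a routine induction together with a single invocation of Proposition~\ref{trunc}.
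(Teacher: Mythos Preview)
Your argument is correct. Note that the paper does not actually prove this proposition; it merely cites it as Lemma~1 of Norris~\cite{Norris}, so there is no in-paper proof to compare against. Your approach---defining a graph-dependent refinement operator $F$ with $\Pi_{t+1}=F(\Pi_t)$, then observing that a fixed point of $F$ forces the whole tail of the sequence to stabilize, and finally invoking Proposition~\ref{trunc} to identify the stable partition with $\Pi$---is precisely the standard argument behind the cited result. One small remark: your final appeal to Proposition~\ref{trunc} is convenient but not strictly necessary; once you know $\Pi_{t'}$ refines $\Pi_t$ for $t'\geq t$ and that $\Pi$ is the common refinement of all the $\Pi_t$, stabilization of the chain already gives $\Pi_t=\bigcap_{s\geq 0}\Pi_{t+s}=\Pi$ directly. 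Either way the proof goes through.
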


For a set $A$, let $|A|$ denote its size. By definition of partitions $\Pi_t$, if $|\Pi_t|=|\Pi_{t+1}|$ then
$\Pi_t=\Pi_{t+1}$. Hence in order to see when partitions stabilize, it is enough to check when their sizes stabilize. 
{This allows us to modify Proposition~\ref{stop} and obtain Proposition~\ref{stopBySize} below, which we will use in some of our algorithms.

\begin{proposition}\label{stopBySize}
If $|\Pi_t|=|\Pi_{t+1}|$, then $\Pi_t=\Pi$.
\end{proposition}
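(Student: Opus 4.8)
The plan is to reduce the statement to Proposition~\ref{stop} by first upgrading the hypothesis $|\Pi_t|=|\Pi_{t+1}|$ to the stronger equality $\Pi_t=\Pi_{t+1}$. The key structural fact I would establish first is that $\Pi_{t+1}$ refines $\Pi_t$: if $u\sim_{t+1} v$, i.e. $\cV^{t+1}(u)=\cV^{t+1}(v)$, then truncating both sides to depth $t$ gives $\cV^t(u)=\cV^t(v)$, so $u\sim_t v$. Hence every equivalence class of $\sim_{t+1}$ is contained in a single equivalence class of $\sim_t$, which means each class of $\Pi_t$ is a union of one or more classes of $\Pi_{t+1}$ (this is exactly the observation already recorded in the text just before the statement).

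Next I would observe that a partition which refines another and has the same number of blocks must be identical to it: if some block $B$ of $\Pi_t$ were a union of two or more blocks of $\Pi_{t+1}$, then $\Pi_{t+1}$ would have strictly more blocks than $\Pi_t$, contradicting $|\Pi_t|=|\Pi_{t+1}|$. Therefore every block of $\Pi_t$ is a single block of $\Pi_{t+1}$, i.e. $\Pi_t=\Pi_{t+1}$.

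Finally, I would invoke Proposition~\ref{stop} directly: from $\Pi_t=\Pi_{t+1}$ we conclude $\Pi_t=\Pi$, which is the desired conclusion.

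There is essentially no serious obstacle here; the only point requiring a little care is the refinement claim and the counting argument that turns $|\Pi_t|=|\Pi_{t+1}|$ into $\Pi_t=\Pi_{t+1}$, after which Proposition~\ref{stop} does the rest. One could also note that this is why the proposition is useful in algorithms: a node can detect stabilization of the partition sequence by merely comparing the \emph{sizes} $|\Pi_t|$ and $|\Pi_{t+1}|$, without needing to verify the full equality of the partitions.
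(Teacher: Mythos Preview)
Your proof is correct and matches the paper's approach exactly: the paper also observes that since $\Pi_{t+1}$ refines $\Pi_t$, equality of sizes forces $\Pi_t=\Pi_{t+1}$, and then invokes Proposition~\ref{stop}. There is nothing to add.
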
}


{\begin{definition}[Level of symmetry]
 For any graph $G$ we define its {\em level of symmetry} $\lambda$ as the smallest integer $t$, for which there exists a node $v$
satisfying the condition $\{u: \cV^t(u)=\cV^t(v)\}=\{u: \cV(u)=\cV(v)\}$. 
\end{definition}}
By Proposition \ref{possible}, for solvable graphs,
the level of symmetry is the smallest $t$ for which there is a node with a unique view at depth $t$. In general, the level of symmetry is the smallest integer $t$ for which some equivalence class in  $\Pi_t$ has size $\sigma$.

{\begin{definition}
Define $\Lambda$ to be the smallest integer $t$ for which $\Pi_t=\Pi_{t+1}$. 
\end{definition}}

We have 

\begin{proposition} \label{Lambda}
$\Lambda \leq D + \lambda$.
\end{proposition}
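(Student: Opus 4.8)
The plan is to prove the stronger statement $\Pi_{D+\lambda}=\Pi$. This suffices: $\Pi$ is a refinement of every $\Pi_t$ (equal views imply equal truncated views), and $\Pi_{t+1}$ is a refinement of $\Pi_t$, so from $\Pi_{D+\lambda}=\Pi$ we get $\Pi\preceq\Pi_{D+\lambda+1}\preceq\Pi_{D+\lambda}=\Pi$, hence $\Pi_{D+\lambda}=\Pi_{D+\lambda+1}$ and therefore $\Lambda\le D+\lambda$ by the definition of $\Lambda$. Since $\Pi$ already refines $\Pi_{D+\lambda}$, proving $\Pi_{D+\lambda}=\Pi$ amounts to proving the reverse refinement: whenever $\cV^{D+\lambda}(u)=\cV^{D+\lambda}(u')$ for two nodes $u,u'$ of $G$, then $\cV(u)=\cV(u')$.

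Fix such $u,u'$ and let $v$ be a node realizing the level of symmetry, i.e. $\{w:\cV^{\lambda}(w)=\cV^{\lambda}(v)\}=\{w:\cV(w)=\cV(v)\}$. Throughout, a path in $G$ is coded as in the definition of the view, recording the port numbers at both endpoints of each traversed edge; for a node $w$ and a path $P$ valid from $w$, write $w\cdot P$ for its other endpoint. I will use the elementary fact, immediate from the recursive definition of the view, that the subtree of $\cV(w)$ rooted at the node reached by $P$ equals $\cV(w\cdot P)$, and likewise the subtree of $\cV^{t}(w)$ below that node equals $\cV^{t-|P|}(w\cdot P)$ whenever $t\ge|P|$.

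Let $Q$ be a shortest path in $G$ from $u$ to $v$, so $u\cdot Q=v$ and $|Q|\le D$. Since $|Q|\le D+\lambda$, the path $Q$ appears in $\cV^{D+\lambda}(u)$, hence in $\cV^{D+\lambda}(u')$, so $Q$ is valid from $u'$; set $v'=u'\cdot Q$. By the subtree fact, the subtree of $\cV^{D+\lambda}(u)$ below the node reached by $Q$ is $\cV^{D+\lambda-|Q|}(v)$ and the corresponding subtree of $\cV^{D+\lambda}(u')$ is $\cV^{D+\lambda-|Q|}(v')$; these two subtrees are equal because $\cV^{D+\lambda}(u)=\cV^{D+\lambda}(u')$, and truncating them to depth $\lambda$ (legitimate since $D+\lambda-|Q|\ge\lambda$) yields $\cV^{\lambda}(v)=\cV^{\lambda}(v')$. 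By the defining property of $v$, this gives $\cV(v')=\cV(v)$. Finally let $\bar Q$ be the reverse of $Q$ (reverse the order of edges and swap the two port numbers on each edge); then $\bar Q$ is valid from $v$ with $v\cdot\bar Q=u$, and it is also valid from $v'$ with $v'\cdot\bar Q=u'$, being the reverse of the path $Q$ from $u'$ to $v'$. Applying the subtree fact once more, $\cV(u)=\cV(v\cdot\bar Q)$ and $\cV(u')=\cV(v'\cdot\bar Q)$ are the subtrees of $\cV(v)$ and $\cV(v')$ rooted at the node reached by $\bar Q$; since $\cV(v)=\cV(v')$ these coincide, so $\cV(u)=\cV(u')$, as required.

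The bookkeeping with truncation depths and the ``subtree equals the view of the endpoint'' fact are routine. The one essential step is the middle one: a shortest path from an arbitrary node $u$ to the symmetry witness $v$ has length at most $D$, so the depth-$(D+\lambda)$ view of $u$ exposes a full depth-$\lambda$ view of $v$, and by the definition of $\lambda$ such a depth-$\lambda$ view already determines the entire view of $v$. This is precisely where both $D$ and $\lambda$ enter the bound, and it is the part that would need the most care in a full write-up, in particular the correct handling of reversed paths, which is why I insist on coding paths with the port numbers at both endpoints.
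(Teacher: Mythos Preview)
Your proof is correct and follows essentially the same route as the paper: reduce to showing that $\cV^{D+\lambda}(u)=\cV^{D+\lambda}(u')$ implies $\cV(u)=\cV(u')$, reach the symmetry witness $v$ along a shortest path of length at most $D$, extract the depth-$\lambda$ view of $v$ inside the depth-$(D+\lambda)$ view, and conclude via the defining property of $\lambda$. The only cosmetic difference is that you finish with a direct subtree identification along $\bar Q$, whereas the paper phrases the last step as a contradiction (if $\cV(u)\neq\cV(u')$ then prepending $\bar Q$ yields a witness that $\cV(v)\neq\cV(v')$); the content is the same.
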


\begin{proof}
It is enough to show that if $\cV^{D+\lambda}(w)=\cV^{D+\lambda}(w')$, for some nodes $w$ and $w'$, then  $\cV(w)=\cV(w')$.
Let $v$ be a node for which $\{u: \cV^{\lambda}(u)=\cV^{\lambda}(v)\}=\{u: \cV(u)=\cV(v)\}$. Consider nodes $w$ and $w'$ for which
$\cV^{D+\lambda}(w)=\cV^{D+\lambda}(w')$. Let $p$ be a shortest path (coded as a sequence of ports) from $w$ to $v$. Let $v'$ be the node at the end of the same
path $p$ starting at $w'$.  $\cV^{\lambda}(v)$ is included in $\cV^{D+\lambda}(w)$ and $\cV^{\lambda}(v')$ is included in $\cV^{D+\lambda}(w')$.
It follows that $\cV^{\lambda}(v)=\cV^{\lambda}(v')$ and hence $\cV(v)=\cV(v')$. Suppose for contradiction that $\cV(w)\neq\cV(w')$. Let $q$ and $q'$ be 
paths in views $\cV(w)$ and $\cV(w')$ witnessing to their difference. Let $\overline{p}$ be the reverse of path $p$.  
Then the concatenations $\overline{p}q$ and  $\overline{p}q'$ are paths witnessing to the difference of $\cV(v)$ and $\cV(v')$, which gives a contradiction. 
\end{proof}

We fix a canonical linear order on all finite rooted trees with unlabeled nodes and labeled ports, e.g., as the lexicographic order of DFS traversals of these
trees, starting from the root and exploring children of a node in increasing order of ports.  For any subset of this class, the term ``smallest'' refers to this order.
Since views at a depth $t$ are such trees, we will elect as leader a node whose view at some depth is the smallest in some class of views. The difficulty is to establish when views at some depth are already unique for a solvable graph, and to decide fast if the graph is solvable, in the case of strong LE.

All our algorithms are written for a node $u$ of the graph.

\section{Weak leader election}

In this section we show that the optimal time of weak leader election is $\Theta(D+\lambda)$, if either the diameter $D$ of the graph or its size $n$ is known to the nodes. We first give two algorithms, one working for the class of graphs of given diameter $D$ and the other for the class of graphs of given size $n$, that elect a leader in time $O(D+\lambda)$ on graphs with diameter $D$ and level of symmetry $\lambda$, whenever election is possible. 

{ Our algorithms use the subroutine $COM$ to exchange views at different depths with their neighbors. This subroutine is detailed in Algorithm~\ref{alg:COM}.}

\begin{algorithm}
{ \caption{$COM(i)$\label{alg:COM}}

{\bf let} $\cV^i(u)$ be the view of node $u$ at depth $i$\\
{\bf send} $\cV^i(u)$ to all neighbors;\\
{\bf foreach} neighbor $v$ of $u$\\
\hspace*{1cm} {\bf receive} $\cV^i(v)$ from $v$
}
\end{algorithm}

Algorithm~\ref{alg:knownDiameter} works for the class of graphs of given diameter $D$.

\begin{algorithm*}

\caption{WLE-known-diameter($D$)\label{alg:knownDiameter}}
{\bf for} $i:=0$ {\bf to} $D-1$ {\bf do} $COM(i)$\\
compute $|\Pi_0|$; $j:=0$\\
{\bf repeat}\\
\hspace*{1cm}$COM(D+j)$; $j:=j+1$; {compute $|\Pi_j|$}\\
{\bf until} $|\Pi_j| = |\Pi_{j-1}|$\\
$V:=$ the set of nodes $v$ in $\cV^{D+j}(u)$ having the smallest $\cV^{j-1}(v)$\\
elect as leader the node in $V$ having the lexicographically smallest path from $u$

\end{algorithm*}

%
%
%
%
%
%

\begin{theorem}\label{weak-known-diam}
{Algorithm~\ref{alg:knownDiameter}  - WLE-known-diameter($D$) -} elects a leader in every solvable graph of diameter $D$, in time $O(D+\lambda)$,
where $\lambda$ is the level of symmetry of the graph.
\end{theorem}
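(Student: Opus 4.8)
The plan is to analyze the two phases of Algorithm~\ref{alg:knownDiameter} separately: first argue correctness (the elected node is well-defined and has a unique view, so all nodes elect the same leader), then bound the running time. The key observation for correctness is that the \textbf{repeat} loop runs until $|\Pi_j| = |\Pi_{j-1}|$, i.e., until the partition sizes stabilize; by Proposition~\ref{stopBySize}, at that moment $\Pi_{j-1} = \Pi$, so the equivalence classes of $\sim_{j-1}$ coincide with those of $\sim$. Since the graph is solvable, Proposition~\ref{possible} guarantees that the class of $\sim$ achieving the minimum view is a singleton $\{v^\star\}$; hence the set $V$ of nodes in $\cV^{D+j}(u)$ whose truncated view $\cV^{j-1}(\cdot)$ is smallest corresponds to exactly one node of the graph. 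The tie-break by lexicographically smallest path then just picks a canonical path to $v^\star$; I would note that every node $u$ sees $v^\star$ inside $\cV^{D+j}(u)$ because $j \geq 1$ and the view has depth $D+j > D \geq \mathrm{dist}(u, v^\star)$, so the elected leader and a path to it are computable at every node, and by symmetry of views all nodes agree on which node (and which path pattern) this is.

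\textbf{Running time.} The first \textbf{for} loop takes $D$ rounds; I need to bound how many iterations the \textbf{repeat} loop performs. Let $j^\star$ be the value of $j$ when the loop exits, so $|\Pi_{j^\star}| = |\Pi_{j^\star-1}|$, which by the remark before Proposition~\ref{stopBySize} means $\Pi_{j^\star-1} = \Pi_{j^\star}$, hence $\Pi_{j^\star-1} = \Pi$. By the definition of $\Lambda$, the partitions first stop changing at step $\Lambda$, i.e., $\Pi_\Lambda = \Pi_{\Lambda+1}$ and $\Pi_{\Lambda-1} \neq \Pi_\Lambda$. Because the sequence of partitions is monotone (each $\Pi_t$ refines into $\Pi_{t'}$ for $t < t'$), once sizes fail to strictly increase they have stabilized; so the loop exits at the first $j$ with $\Pi_j = \Pi$, which is $j^\star = \Lambda$ (if $\Lambda \geq 1$; the boundary case $\Lambda = 0$ is handled directly since then $\Pi_0 = \Pi$ and a slight care with indices is needed). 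Thus the \textbf{repeat} loop runs at most $\Lambda + 1$ iterations, contributing $O(\Lambda)$ rounds, and the total time is $D + O(\Lambda) = O(D + \Lambda)$.

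\textbf{Conclusion via Proposition~\ref{Lambda}.} Finally I invoke Proposition~\ref{Lambda}, which gives $\Lambda \leq D + \lambda$, so the total running time is $O(D + (D+\lambda)) = O(D+\lambda)$, as claimed. I would also remark that the local computation of $|\Pi_j|$ is possible: after round $D+j$ each node knows $\cV^{D+j}(u)$, and since $D+j \geq D \geq$ the diameter, this view contains (as labeled subtrees rooted at the nodes reachable within distance from $u$) enough information to reconstruct $\cV^{j}(w)$ for every node $w$ of the graph, hence to compute the partition $\Pi_j$ and its size; this is where knowing $D$ is essential.

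\textbf{Main obstacle.} The step I expect to require the most care is pinning down exactly which truncation depth suffices and matching the loop's exit index to $\Lambda$ — in particular verifying that $|\Pi_j| = |\Pi_{j-1}|$ is detected exactly at $j = \Lambda$ (using monotonicity of the partition refinement so that sizes are nondecreasing and strictly increase until stabilization) and handling the off-by-one bookkeeping between "$\cV^{D+j}(u)$ reconstructs $\Pi_j$" and "the leader is chosen by smallest $\cV^{j-1}(v)$." The correctness argument that $V$ is a singleton is then an immediate consequence of Propositions~\ref{stopBySize} and~\ref{possible}, and the final time bound follows from Proposition~\ref{Lambda}.
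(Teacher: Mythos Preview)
Your approach is essentially the same as the paper's: both argue that the \textbf{repeat} loop terminates once the partition sizes stabilize, invoke Proposition~\ref{stopBySize} to conclude the partition equals $\Pi$, use Proposition~\ref{possible} to deduce all classes are singletons in a solvable graph so $V$ corresponds to a unique node, and then bound the total time by $D + O(\Lambda) \le O(D+\lambda)$ via Proposition~\ref{Lambda}.

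One small correction: you write that the loop exits at $j^\star = \Lambda$, but in fact the exit condition $|\Pi_j| = |\Pi_{j-1}|$ is first met at $j^\star = \Lambda + 1$ (since $\Pi_{\Lambda-1} \neq \Pi_\Lambda$ but $\Pi_\Lambda = \Pi_{\Lambda+1}$). Your own earlier line ``$\Pi_{j^\star-1} = \Pi$'' is the correct deduction and is consistent with $j^\star = \Lambda+1$; only the subsequent identification of $j^\star$ slips. This does not affect your iteration bound ($\Lambda+1$ iterations) or the final $O(D+\lambda)$ conclusion, and the paper's proof handles the indexing the same way (computing $|\Pi_j|$ for $j = 1,\dots,\Lambda+1$ and electing via $\cV^{\Lambda}$).
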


\begin{proof}
All nodes of the graph find $|\Pi_0|$ {(i.e., the number of different node degrees in the graph)} after $D$ rounds, and then they find sizes of consecutive partitions $\Pi_j$, for $j=1, \dots , \Lambda +1$ {(by counting the number of distinct views at depth $j$)}.
At this time the exit condition of the ``repeat'' loop is satisfied. 
All nodes stop simultaneously and elect a leader. Since the graph is solvable, by the definition of $\Lambda$ and in view of Propositions \ref{possible} and  {\ref{stopBySize}}, all elements of the partition $\Pi_{\Lambda}$ are singletons {(recall that equivalence classes in $\Pi$, and thus in $\Pi_{\Lambda}$, are of equal size $\sigma$, and $\sigma=1$ in solvable graphs)}. Hence all nodes in $V$ correspond to the same node in the graph and consequently all nodes elect as leader the same node.
All nodes stop in round $D+\Lambda$, which is at most $2D+ \lambda$ by Proposition \ref{Lambda}.
\end{proof}

Algorithm~\ref{alg:knownSize} works for the class of graphs of given size $n$.

\begin{algorithm*}
\caption{WLE-known-size$(n)$\label{alg:knownSize}}

$i:=0$; $x:=1$;\\
{\bf while} $x<n$ {\bf do}\\
\hspace*{1cm}$COM(i)$; $i:=i+1$\\
\hspace*{1cm}{\bf for} $j:=0$ {\bf to} $i$ {\bf do}\\
\hspace*{2cm}$L_j:=$ the set of nodes in $\cV^{i}(u)$ at distance at most $j$ from $u$ (including $u$)\\
\hspace*{2cm}$num_j:=$ the number of nodes in $L_j$ with distinct views at depth $i-j$\\
\hspace*{1cm}$x:=$ max $\{num_j: j\in\{0,1,\dots , i\}\}$\\
compute $\Lambda$ and $D$\\
$V:=$ the set of nodes $v$ in $\cV^{i}(u)$ having the smallest $\cV^{\Lambda}(v)$\\
elect as leader the node in $V$ having the lexicographically smallest path from $u$\\
{\bf while} $i\leq D+\Lambda$ {\bf do}\\
\hspace*{1cm}
$COM(i)$; $i:=i+1$ {//Allow late nodes to terminate their protocol!}
\end{algorithm*}

%
%
%
%
%
%
%

\begin{theorem}\label{weak-known-size}
{Algorithm~\ref{alg:knownSize} - WLE-known-size$(n)$ -} elects a leader in every solvable graph of size $n$, in time $O(D+\lambda)$,
where $\lambda$ is the level of symmetry of the graph.
\end{theorem}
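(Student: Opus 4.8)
The plan is to establish two things: first, that the main ``while'' loop on line 2 terminates after $i$ reaches a value in $O(D+\lambda)$, and second, that at that point the node $u$ has enough of its view to compute $\Lambda$, $D$, and the elected leader correctly. The role of the loop's stopping criterion $x<n$ is that $x$ counts the maximum, over all radii $j\le i$, of the number of nodes within distance $j$ of $u$ that are distinguishable by their views at depth $i-j$; once $u$ has communicated far enough that some ball around it both (a) contains all $n$ nodes of the graph and (b) has been explored to a residual depth at which all those $n$ nodes have pairwise-distinct views, we get $x=n$ and the loop halts. So the first key step is to argue that $i=D+\Lambda$ suffices for this: after $D+\Lambda$ rounds, the ball $L_D$ of radius $D$ around $u$ in $\cV^{D+\Lambda}(u)$ contains (copies of) all $n$ nodes, and each such node $w$ has its view known to residual depth $(D+\Lambda)-D=\Lambda\ge\Lambda$, so by Proposition 2.3 (and the definition of $\Lambda$, i.e.\ $\Pi_\Lambda=\Pi$) these $n$ views are pairwise distinct, giving $num_D=n$ and hence $x\ge n$. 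Since $\Lambda\le D+\lambda$ by Proposition 2.5, the loop exits with $i\le D+\Lambda+1\in O(D+\lambda)$. (One must also check the loop cannot exit with $x>n$ or exit spuriously early — it cannot exit early because $num_j$ never exceeds the true number of nodes reachable within distance $j$, which is $\le n$, so $x<n$ holds until $u$ can actually ``see'' all $n$ nodes; this monotonicity/soundness of $num_j$ is the second key step.)

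The third step is to verify that once the loop halts with this value of $i$, the quantities $\Lambda$ and $D$ are computable from $\cV^i(u)$. For $D$: $u$ now knows the entire graph up to isomorphism of views — concretely, it has $\cV^i(u)$ with $i\ge D$, and the set of distinct views $\{\cV^{i-j}(w): w\in L_j\}$ for the maximizing $j$ reconstructs the partition $\Pi$ and the adjacency structure well enough to read off the diameter; here I would invoke that with $x=n$ distinct truncated views in hand, $u$ has effectively recovered the quotient graph $G/\!\sim$, which (since $\sigma=1$ for solvable graphs) is $G$ itself, so $D$ is determined. For $\Lambda$: by Proposition 2.4, $\Lambda$ is the least $t$ with $|\Pi_t|=|\Pi_{t+1}|$, and $u$ can compute $|\Pi_t|$ for all $t\le i-1$ from the views it holds (exactly as in Algorithm 2), and since $i\ge D+\Lambda\ge\Lambda+1$ this least $t$ is found. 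Then $V$ and the elected leader are defined from $\cV^\Lambda(v)$ for $v$ ranging over $\cV^i(u)$; since $\Pi_\Lambda=\Pi$ and the graph is solvable, all equivalence classes at depth $\Lambda$ are singletons, so the node with the smallest $\cV^\Lambda(v)$ is unique and every node of the graph, running the same computation, elects the same leader and records a (lexicographically smallest) port-path to it — mirroring the argument in the proof of Theorem 3.1.

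The fourth and final step is the second ``while'' loop (lines 11–12), whose only purpose is to keep $u$ responsive to neighbors that halted their first loop later than $u$ did: different nodes may exit line 2 at slightly different values of $i$ (their balls fill up at different rates), but all exit by round $D+\Lambda+1$, and they need each other's deeper views to finish their own computations. Running $COM(i)$ up through $i=D+\Lambda$ guarantees every node supplies every message any neighbor could still be waiting for; after that all nodes have terminated, so the total running time is $D+\Lambda+O(1)\le 2D+\lambda+O(1)\in O(D+\lambda)$ by Proposition 2.5. The main obstacle I anticipate is the bookkeeping in the third step — rigorously justifying that ``$x=n$ distinct truncated views within a ball'' really does let $u$ reconstruct enough of $G$ to read off $D$ and all the $|\Pi_t|$ values — since, unlike in the known-diameter algorithm, $u$ does not start out knowing how deep ``depth $D$'' is; the clean way around this is to note that $x=n$ forces the maximizing radius $j$ to satisfy $L_j$ containing all nodes, hence $j\ge$ (eccentricity of $u$) and the residual depth $i-j\ge n-1$ is not needed — what is needed is only $i-j\ge\Lambda$, which follows because otherwise two of the $n$ balls' nodes would share a view and $num_j<n$. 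I would present this as a short lemma isolating ``$x=n$ at step $i$ $\implies$ $i\ge$ ecc$(u)$ and $\cV^{i}(u)$ determines $\Pi$ and $D$'' and then the theorem follows by combining it with Propositions 2.4 and 2.5.
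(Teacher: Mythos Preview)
Your proposal is correct and follows essentially the same approach as the paper: argue that the first \texttt{while} loop exits by round $D+\Lambda\le 2D+\lambda$ because $\cV^{D+\Lambda}(u)$ contains $n$ pairwise-distinct depth-$\Lambda$ views, assert that the graph can then be reconstructed so that $D$ and $\Lambda$ are computable, invoke the election rule of Algorithm~\ref{alg:knownDiameter} for correctness, and use the second loop to let slower nodes catch up. You are in fact more careful than the paper---you check that the loop cannot exit prematurely (since $num_j\le n$ always) and you isolate the reconstruction of $D$ as the delicate step, whereas the paper simply states that $u$ ``can reconstruct an isomorphic copy of the graph'' without further justification.
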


\begin{proof}
Consider a node $u$. After $i \leq D+\Lambda$ rounds, node $u$ gets a view  $\cV^{i}(u)$ that contains views $\cV^{\Lambda}(v )$, for all nodes $v$.
Since the graph is solvable, all views $\cV^{\Lambda}(v )$ are different, and hence $u$ exits the first ``while'' loop after seeing $n$ different views at this depth. At this point $u$ can reconstruct an isomorphic
copy of the graph and hence compute $D$ and $\Lambda$. 
The election rule is as in {Algorithm~\ref{alg:knownDiameter}}. The second ``while'' loop guarantees that every node $v$ will be able to get  $\cV^{D+\Lambda}(v)$
and hence will exit the first while loop and elect the same leader.
\end{proof}


In order to show that {Algorithm~\ref{alg:knownDiameter}} is optimal for the class of graphs of diameter $D$ 
and {Algorithm~\ref{alg:knownSize}}  is optimal for the class of graphs of size $n$  we prove the  following theorem.
It shows a stronger property: both above algorithms have optimal complexity even among weak LE algorithms working only when all three parameters $n$, $D$ and $\lambda$ are known. 

\begin{theorem}\label{lower-weak}
For any $D\geq1$ and any $\lambda \geq 0$, with $(D,\lambda)\neq (1,0)$, there exists an integer $n$ and a solvable graph $G$ of size $n$, diameter $D$ and level of symmetry $\lambda$,
such that every algorithm for weak LE working for the class of graphs of size $n$, diameter $D$ and level of symmetry $\lambda$ takes time at least $D+\lambda$ on the graph $G$\footnote{Notice that there is no solvable graph with $D=1$ and $\lambda =0$, because the latter condition, for solvable graphs, means that there is a node of a unique degree, {contradicting the requirement $D=1$, i.e., having a clique of at least 2 nodes}.}
\end{theorem}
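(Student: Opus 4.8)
The goal is to exhibit, for each admissible pair $(D,\lambda)$, a single solvable graph $G$ on which every weak-LE algorithm needs at least $D+\lambda$ rounds, even when $n$, $D$ and $\lambda$ are all hard-coded into the algorithm. The natural strategy is an \emph{indistinguishability argument}: I would construct $G$ together with an auxiliary graph $H$ (or a pair of nodes inside a single larger graph) such that some node $u$ of $G$ has $\cV^{D+\lambda-1}(u)=\cV^{D+\lambda-1}(u')$ for a node $u'$ that, in the relevant sense, "should" be elected differently. Because an algorithm in the $\loc$ model running for $t$ rounds at a node sees exactly $\cV^t$ of that node, if $u$ cannot tell its radius-$(D+\lambda-1)$ neighborhood apart from that of $u'$, it cannot correctly point to a path to the unique leader before round $D+\lambda$. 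The subtlety is that both graphs must simultaneously have diameter exactly $D$, level of symmetry exactly $\lambda$, and the \emph{same} size $n$, so that they belong to the same class the algorithm is designed for.

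\medskip

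The construction I would use is a "long path attached to a gadget": take a gadget $K$ that is symmetric enough to force level of symmetry $\lambda$ — concretely, two copies of a rooted tree (or small graph) of depth $\lambda$ that have identical views up to depth $\lambda-1$ but differ at depth $\lambda$, glued at their roots to a small connector that breaks the global symmetry (so the whole graph is solvable, with some node of unique view at depth exactly $\lambda$). Then attach a path of length $\approx D$ so that the diameter becomes exactly $D$ and the far endpoint $u$ of the path is at distance $\geq D$ from the place where the symmetry resolves. The point of the path is that $u$'s view up to depth $D+\lambda-1$ only reaches the gadget "to depth $\lambda-1$ along the relevant branch", which is not enough to distinguish the two near-symmetric copies inside $K$; hence $u$ cannot yet decide which of the two candidate nodes is the leader, so it cannot know a correct path to the leader. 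For the quantitative claim I would show precisely that in round $D+\lambda-1$, node $u$'s view coincides with what it would see in a companion graph where the leader sits in the "other" copy of the gadget, while both graphs have the prescribed $(n,D,\lambda)$; this forces at least one more round.

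\medskip

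Concretely the steps, in order: (1) define the gadget $K_\lambda$ realizing level of symmetry exactly $\lambda$ and verify this using the $\Pi_t$ machinery — that $\Pi_{\lambda-1}$ still has a non-singleton class but $\Pi_\lambda$ has a singleton (i.e. some $\cV^\lambda$ is unique while all $\cV^{\lambda-1}$ are not); (2) attach a path $P_D$ (and perhaps pad with extra symmetric decorations) so that the diameter is exactly $D$ and record $n$; (3) build the companion graph $G'$ by the symmetric modification that moves the "distinguished" branch of $K_\lambda$ to the other copy, and check that $G'$ also has size $n$, diameter $D$, level of symmetry $\lambda$, and is solvable; (4) fix the far endpoint $u$ of the path and prove $\cV^{D+\lambda-1}_G(u)=\cV^{D+\lambda-1}_{G'}(u)$ by a depth-budget argument: a path of length $\le D+\lambda-1$ from $u$ enters the gadget with at most $\lambda-1$ steps left, and $\cV^{\lambda-1}$ of the two root nodes of $K_\lambda$ agree; (5) conclude: any weak-LE algorithm must, by round $D+\lambda-1$, have $u$ output the same path in $G$ and $G'$, but the correct leader (the unique-view node, or the non-leader's required path to it) differs between $G$ and $G'$, so the algorithm is wrong in at least one of them at time $D+\lambda-1$; hence it needs time $\ge D+\lambda$ on at least one graph of the stated parameters. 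Finally (6) handle boundary cases: small $\lambda$ (e.g. $\lambda=0$, where the gadget degenerates and only the path plus an asymmetric end is needed), small $D$, and the excluded case $(D,\lambda)=(1,0)$ as flagged in the footnote.

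\medskip

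\textbf{Main obstacle.} The delicate part is engineering a single family that pins down \emph{all three} parameters exactly and simultaneously in both $G$ and its companion $G'$ — especially forcing the level of symmetry to be \emph{exactly} $\lambda$ (not smaller) while keeping the graph solvable and the diameter at the target $D$. Making the gadget's two copies agree on $\cV^{\lambda-1}$ but differ on $\cV^{\lambda}$ requires a careful inductive description of the rooted trees involved, and one must ensure the connector that restores solvability does not inadvertently lower $\lambda$ by creating an earlier unique view, nor inflate the diameter. The indistinguishability step itself (step 4) is then a routine "shortest path has at most $D+\lambda-1$ edges, spend $\ge D$ reaching the gadget, $\le \lambda-1$ remain" counting argument, so the weight of the proof lies in the construction and its parameter bookkeeping, together with the case analysis for degenerate small values.
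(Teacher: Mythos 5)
Your high-level idea --- ``spend $D$ rounds to reach the place where symmetry resolves, and $\lambda$ more to resolve it'' --- is the same counting intuition as the paper's, but two of your steps have genuine gaps. The critical one is step (5). Weak LE does not come with a ``correct leader'' determined by the graph: the algorithm may elect \emph{any} node, and a non-leader is only required to output some port sequence leading to whatever node the algorithm makes the unique leader. So from $\cV^{D+\lambda-1}_G(u)=\cV^{D+\lambda-1}_{G'}(u)$ you only get that $u$ emits the same (status, path) in both graphs; to reach a contradiction you would have to exclude that this common port sequence ends at the algorithm-chosen leader in $G$ \emph{and} at the algorithm-chosen leader in $G'$, which you cannot do without also controlling the algorithm's decisions at all other nodes of both graphs. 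Moreover, even if repaired, a two-graph argument only shows that every algorithm is slow on at least one of $G,G'$, i.e.\ the statement with the quantifiers swapped, whereas the theorem fixes a single graph $G$ first and demands that \emph{every} algorithm be slow on that $G$ (this stronger form is what makes the bound hold even when $n$, $D$ and $\lambda$ are all known). The paper avoids both problems by exhibiting two distinct nodes $x\neq\overline{x}$ inside \emph{one} solvable graph $R_{D,\lambda}$ with identical views up to depth $D+\lambda-1$: such nodes must produce identical outputs, so either there are two leaders, or the same port sequence traced from two distinct nodes of the antipodal clique would have to end at the single leader --- impossible because every port class in the construction is a perfect matching within or between consecutive cliques, so equal port sequences from distinct nodes of a clique always end at distinct nodes.

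The second gap is the construction itself. A pendant path of length about $D$ attached to a symmetry gadget will in general destroy the level of symmetry: nodes on or near the path acquire unique views at constant depth (a unique degree alone already forces $\lambda=0$), so the resulting graph does not have level of symmetry $\lambda$ and the target bound $D+\lambda$ is no longer the right quantity for it. You flag exactly this as the ``main obstacle'' but leave it unresolved, and it is where the weight of the paper's proof lies: the paper builds a clique $Q_{\lambda+1}$ whose level of symmetry is exactly $\lambda$ by a delicate inductive port assignment (Lemma~\ref{clique}), a fully homogeneous clique $\widetilde{Q}_{\lambda+1}$, and then arranges one copy of $Q_{\lambda+1}$ with $2D-1$ copies of $\widetilde{Q}_{\lambda+1}$ in a ring of cliques, so that every node locally looks like every other and the diameter is $D$ without any low-symmetry appendage; Lemma~\ref{ring} then certifies that the depth at which symmetry resolves survives the embedding into the ring. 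Without a concrete gadget with these properties and a proof that gluing preserves both $\lambda$ and $D$ (plus the separate treatment of the degenerate cases $D=1$ or $\lambda\le 1$, which the paper handles in Lemma~\ref{small}), your outline does not yet constitute a proof.
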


Before proving Theorem~\ref{lower-weak} we present a construction of a family $Q_k$ of complete graphs (cliques) that will be used in the proof of our lower bound.
The construction consists in assigning port numbers.
In order to facilitate subsequent analysis we also assign labels to nodes of the constructed cliques.
{ We will use induction to describe the construction of $Q_k$.}

\begin{figure}
\begin{center}
\begin{tabular}{cc}
(a)&

\scalebox{0.4}{
\input{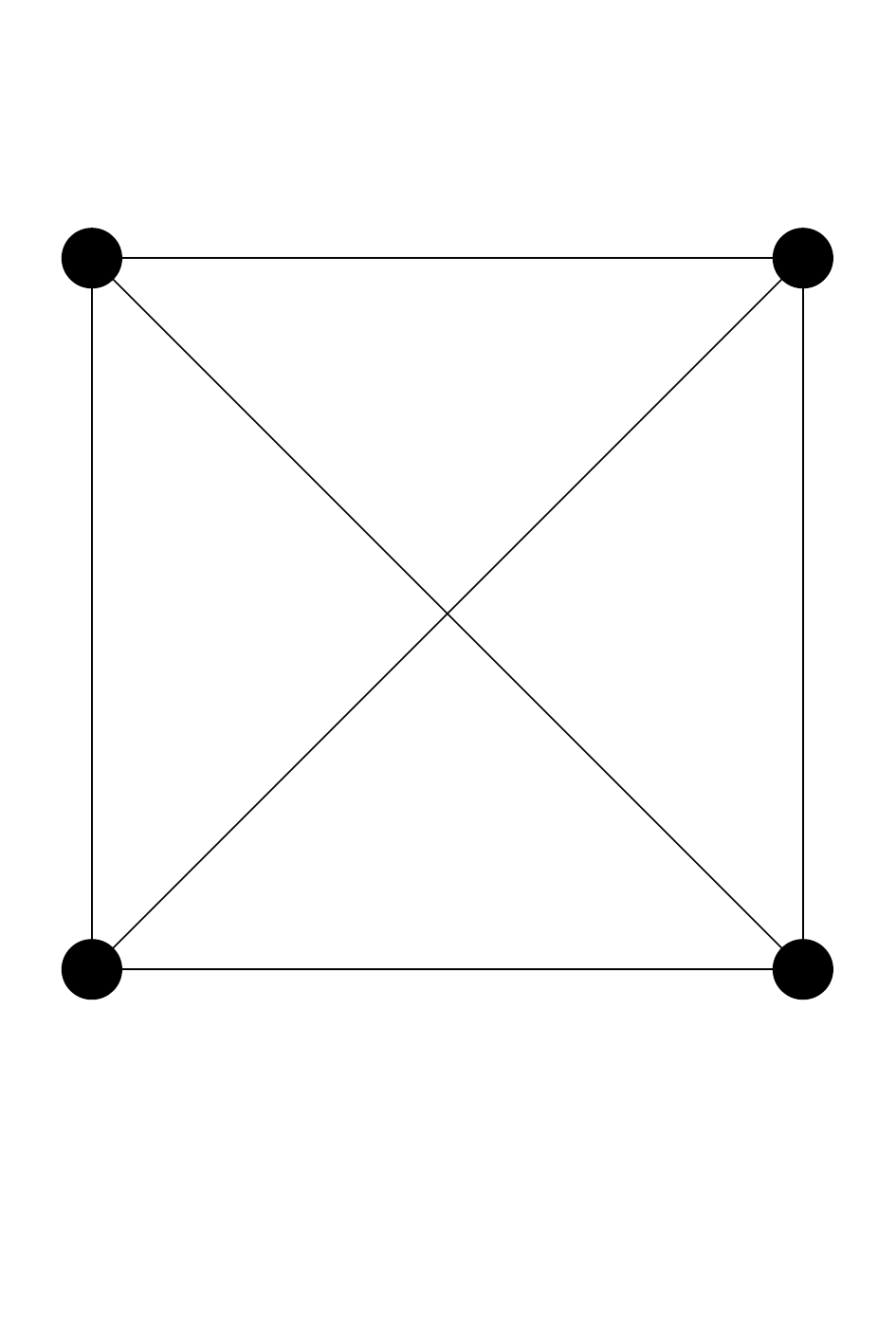_t}}\\\\\hline\\
(b)&
\scalebox{0.4}{
\input{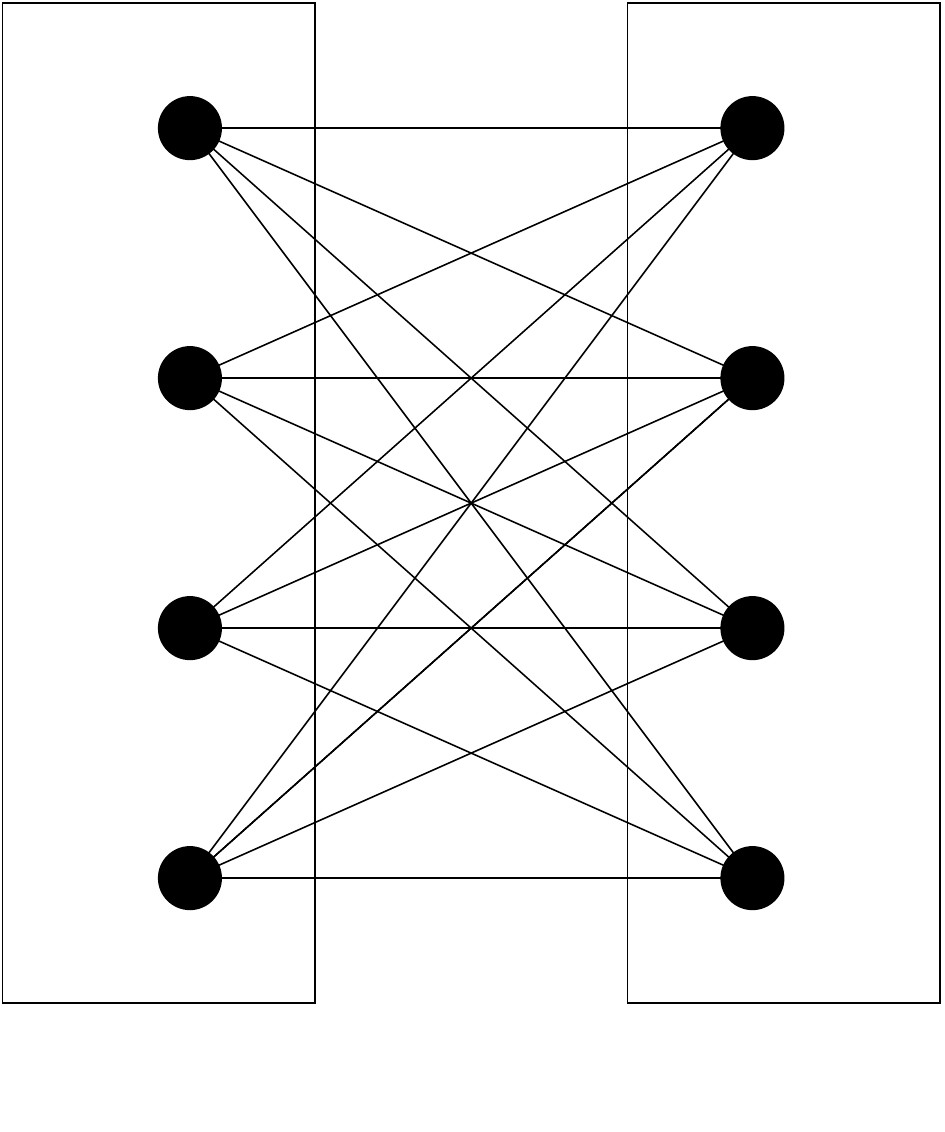_t}}\\\\\hline\\
(c)&
\scalebox{0.4}{
\input{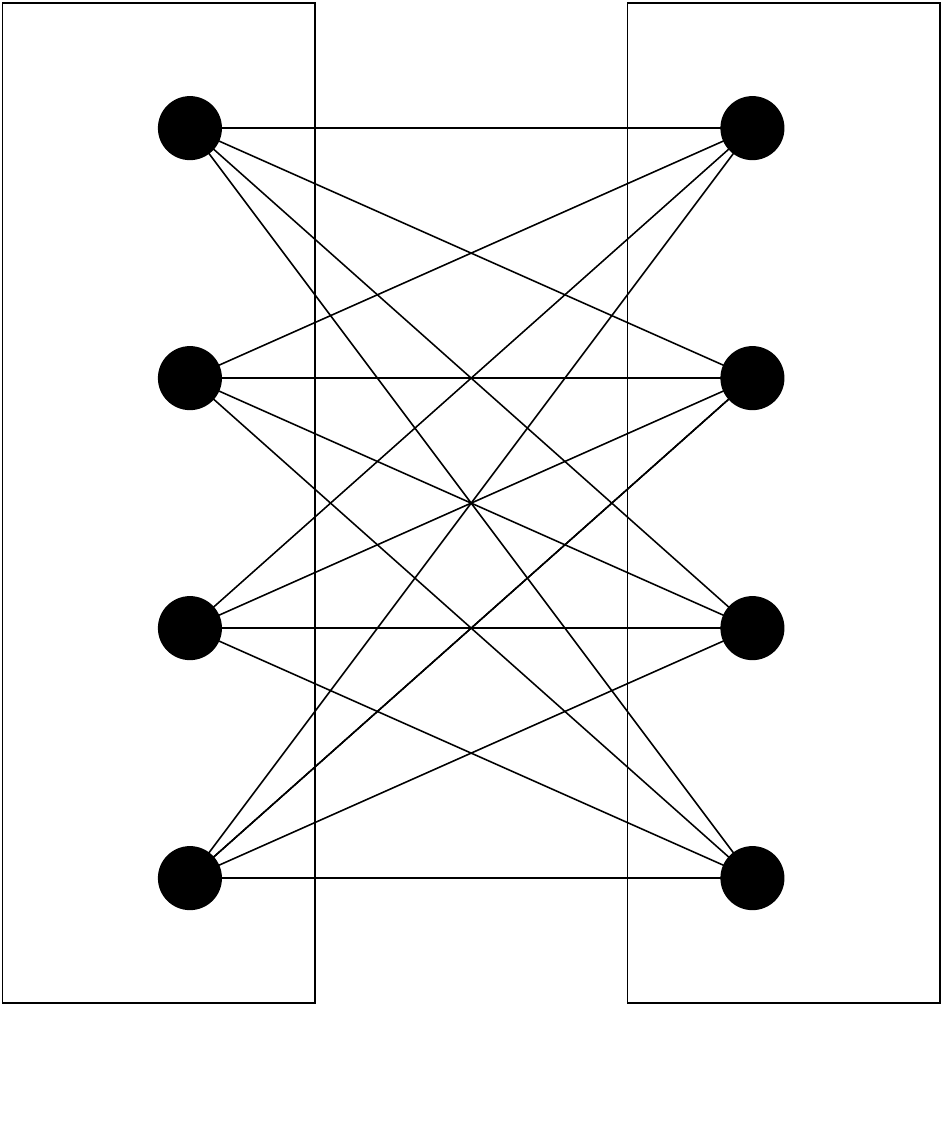_t}}
\\\\\hline\\
(d)&
\scalebox{0.4}{
\input{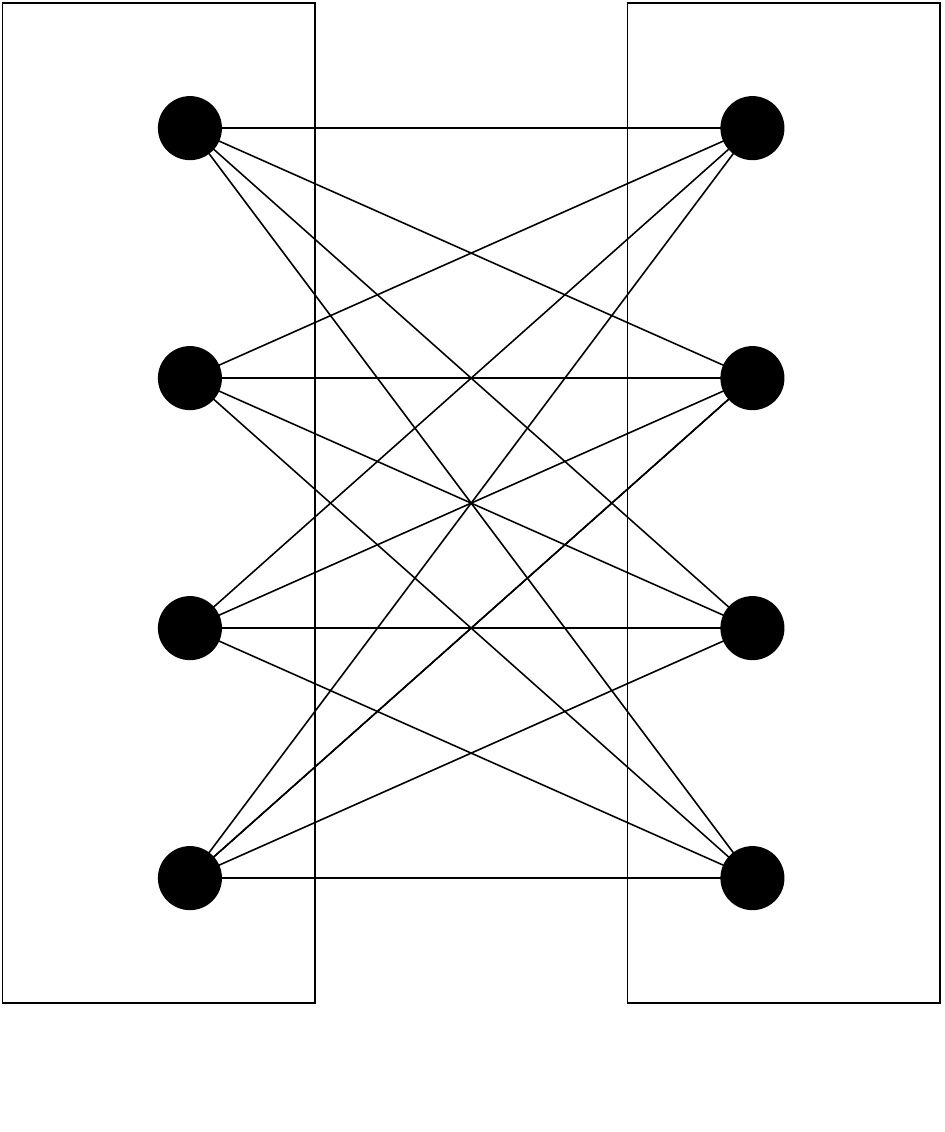_t}}
\end{tabular}
\caption{\label{fig.Q2Q3} a) Clique $Q_2$. b) Clique $Q_3$  {(internal edges of $Q_2$ and $\overline{Q}_2$ are omitted).
c) and d) Extracts of graph $Q_4$.}}

\end{center}
\end{figure}

%

$Q_1$ is the single node graph. $Q_2$ is a 4-node clique with port numbers defined as follows.
For a node $u$, we say that edge $(i,j)$ is \emph{incident} to $u$, if the edge corresponding to port $i$ at node $u$ corresponds to port $j$ at some node $v$.
Nodes in $Q_2$ are uniquely identifiable by the set of their incident edges.
Below we assign distinct labels to the four nodes of the clique depending on the sets of their incident edges (see Fig.~\ref{fig.Q2Q3}.a):

\begin{itemize}
\item set of edges $\{(0,0)$, $(1,1)$,  $(2,2)\}$ -- corresponding to label $a$;
\item set of edges $\{(0,0)$, $(1,1)$,  $(2,0)\}$ -- corresponding to label $b$;
\item set of edges $\{(0,2)$, $(1,1)$,  $(2,0)\}$ -- corresponding to label $c$;
\item set of edges $\{(0,2)$, $(1,1)$,  $(2,2)\}$ -- corresponding to label $d$.
\end{itemize}

We additionally assign colors 0, 1, and 2 to edges of $Q_2$ as follows:
edges $\{a,b\}$ and $\{c,d\}$ get color 0,
edges $\{a,c\}$ and $\{b,d\}$ get color 1, and 
edges $\{a,d\}$ and $\{b,c\}$ get color 2.

$Q_3$ is constructed starting from two disjoint copies of clique $Q_2$ as follows.
Denote by $Q_2$ one of the copies and by $\overline{Q}_2$ the other one. 
{For each node $x$ in $Q_2$, we denote by $\overline{x}$ its corresponding node in $\overline{Q}_2$.

Ports 3, 4, 5, and 6 are used to connect each node in $Q_2$ to all nodes in $\overline{Q}_2$ to construct an 8-node clique.
We will use two types of edges. {\em Monochromatic edges} 
will have port 3 or 4 at both endpoints, while {\em skew edges} will have port 5 at one endpoint and port 6 at the other endpoint. For monochromatic
edges we call {\em color} the port number at both their endpoints. We will also consider as monochromatic the edges in $Q_2$.


More precisely, edges of color 3 connect nodes
$\{a, \overline{c}\}$, $\{\overline{a},c\}$, $\{b, \overline{d}\}$, and $\{\overline{b}, d\}$.
Edges of color 4 connect nodes 
$\{a, \overline{d}\}$, $\{\overline{a},d\}$, $\{b, \overline{c}\}$, and $\{\overline{b}, c\}$.


Notice that until now, this construction results in a graph
for which each node has exactly one
other node with the same view, e.g., $a$ and $\overline{a}$ have the same view.
Uniqueness of all views at depth 2 is guaranteed in clique $Q_3$ by the definitions of the skew edges.

In particular, skew edges connect nodes:
\begin{itemize}
\item $\{a, \overline{a}\}$, $\{b, \overline{b}\}$, $\{c, \overline{c}\}$, and $\{d, \overline{d}\}$ with port 6 at nodes $a,b,c$, and $d$ and port 5 at nodes $\overline{a}, \overline{b}, \overline{c}$, and $\overline{d}$;
\item $\{\overline{a},b\}$, $\{\overline{b},a\}$, $\{\overline{c},d\}$, and $\{\overline{d},c\}$ with port 6 at nodes $\overline{a}, \overline{b}, \overline{c}$, and $\overline{d}$ and port 5 at nodes $a,b,c$, and $d$.
\end{itemize}

 This concludes the construction of $Q_3$ (see Fig.~\ref{fig.Q2Q3}.b).
 
A node $x$ in $Q_3$  whose outgoing port 6 leads to a node $y$ receives as its label the concatenation of the labels of nodes $x$ and $y$ in their respective copies of $Q_2$ (removing all overlines).

The complete labeling of nodes in $Q_3$ is:
$aa$, $ab$, $bb$, $ba$, $cc$, $cd$, $dd$, and $dc$.

For $k \ge 3$, the clique $Q_{k+1}$ is produced starting from disjoint copies $Q_k$ and $\overline{Q}_k$ of clique $Q_k$ as follows.

The set of nodes of the clique $Q_{k+1}$ is the union of the sets of nodes of $Q_k$ and $\overline{Q}_k$.
The skew edges connecting nodes of $Q_k$ and $\overline{Q}_k$ will have port numbers $(2^{k+1}-2, 2^{k+1}-3)$, while the monochromatic edges will use the same port $i$, for $i\in[2^k-1, 2^{k+1}-4]$, on both endpoints.
We denote by $\alpha_{j}$ a string of length $2^j$ over the alphabet $\{a,b,c,d\}$. For $h\le j$, $\alpha_h$ is the prefix of length $2^h$ of string $\alpha_j$.
We assign port numbers to the edges according to the following two rules.

\begin{enumerate}
\item {\bf Skew edges:} 
the edge that has port $2^{k+1}-2$ at a node $\alpha_{k-3}\beta_{k-3}$, has port $2^{k+1}-3$ at its other endpoint $\overline{\alpha_{k-3}\beta_{k-3}}$;
the edge that has port $2^{k+1}-2$ at a node $\overline{\alpha_{k-3}\beta_{k-3}}$, has port $2^{k+1}-3$ at its other endpoint $\alpha_{k-3}\gamma_{k-3}$ (where $\beta_{k-3}\neq \gamma_{k-3})$.

\item {\bf Monochromatic edges:} let node $\alpha_{k-3}\beta_{k-3}$ and node $\overline{\gamma_{k-3}\delta_{k-3}}$ 
be connected by a monochromatic edge of color $i$; then
node $\overline{\alpha_{k-3}\beta_{k-3}}$ and node \\$\gamma_{k-3}\delta_{k-3}$ 
are connected by a monochromatic edge of color $i$. Moreover, for each $j\in[0,k-3]$, any node $\alpha_j\eta$ is connected to some node $\overline{\gamma_j\zeta}$ by a monochromatic edge of color $i$.
\end{enumerate}

The label of a node $v$ in $Q_{k+1}$ is given by the concatenation of the labels of the node in $Q_k$ (respectively in $\overline{Q}_k$)  corresponding to $v$ and its neighbor in $\overline{Q}_k$ (respectively $Q_k$) connected by the skew edge $(2^{k+1}-2,2^{k+1}-3)$ incident to $v$.

While rule 1 is constructive, rule 2 is not. However, a port assignment for monochromatic edges of $Q_{k+1}$ that fulfills rule 2, can be obtained, for $k\ge 3$, by exploiting the already defined edges of $Q_k$ as follows.
If there is a monochromatic edge with color $i$ between nodes $\{u,v\}$ in $Q_k$, then
nodes $\{u,\overline{v}\}$ and nodes $\{\overline{u},v\}$ are connected by a monochromatic edge with color $2^{k}+i-1$.
If $\{u, v\}$ are connected by the skew edge $(2^{k}-3,2^{k}-2)$ incident to $u$, then nodes $\{u,\overline{v}\}$ and nodes $\{\overline{u},v\}$ are connected by monochromatic edges of color $2^{k+1} -4$.
Now consider the skew edges from previous steps of the construction i.e., skew edges with port numbers $(2^{j}-2, 2^{j}-3)$
of graph $Q_j$, with $j\le k-1$).
Let $\{u,\overline{u}, v, \overline{v}\}$ be four nodes connected by these edges, where $\{u, v\}$ are nodes from copy $Q_{j-1}$ in the construction of $Q_j$, and nodes $\{\overline{u},\overline{v}\}$ are the corresponding nodes in copy $\overline{Q}_{j-1}$.
More precisely, let:
\begin{itemize}
\item edge $(2^{j}-2, 2^{j}-3)$ incident to $u$ connect nodes $u$ and $\overline{u}$;
\item edge $(2^{j}-2, 2^{j}-3)$ incident to $v$ connect nodes $v$ and $\overline{v}$;
\item edge $(2^{j}-2, 2^{j}-3)$ incident to $\overline{u}$ connect nodes $\overline{u}$ and $v$;
\item edge $(2^{j}-2, 2^{j}-3)$ incident to $\overline{v}$ connect nodes $\overline{v}$ and $u$.
\end{itemize}
Finally, let $\overline{u}$ and $\overline{v}$ from $Q_j$ correspond to nodes $w$ and $z$, respectively, in copy $Q_k$ of the construction of $Q_{k+1}$.
Then nodes $\{u, \overline{w}\}$, $\{\overline{u}, w\}$, $\{v, \overline{z}\}$, $\{\overline{v}, z\}$ are connected by  monochromatic edges of color $2^{k}+2^{j}-4$;
nodes $\{u, \overline{z}\}$, $\{\overline{u}, z\}$, $\{v, \overline{w}\}$, $\{\overline{v}, w\}$ are connected by  monochromatic edges of color $2^{k}+2^{j}-3$. This concludes the definition of monochromatic edges of $Q_{k+1}$, and thus completes its construction.
Extracts from graph $Q_4$ are depicted in Fig.~\ref{fig.Q2Q3}.c and~\ref{fig.Q2Q3}.d.

%
%
%
%
}

A node of the clique $Q_k$ is said to be of \emph{type} $a$, $b$, $c$, or $d$ if it is obtained from a node with label $a$, $b$, $c$, or $d$ (respectively) in a copy of $Q_2$ in the construction of $Q_k$. 
{Notice that the type of a node corresponds to the first letter of its label.}
Consider a path $p$ defined as a sequence of consecutive monochromatic and skew edges (i.e., $p$ is uniquely defined by the sequence of outgoing port numbers).
{If a skew edge $e$ is traversed from node $u$ to $v$ by $p$, then the {\em color} of $e$ in $p$ is given by its port number  at $u$.}
We call $p$ a \emph{distinguishing} path for nodes $x$ and $y$ in $Q_k$, if it yields  two different sequences of node types $a$, $b$, $c$, and $d$ traversed proceeding along $p$, depending on whether the origin of $p$ is $x$ or $y$.

\begin{lemma}\label{clique}
{The clique $Q_{k+1}$ has level of symmetry $k$.}
\end{lemma}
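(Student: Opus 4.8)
The plan is to translate Lemma~\ref{clique} into a statement about distinguishing paths, and then prove that statement by induction on $k$, following the recursive construction of $Q_{k+1}$ from two copies $Q_k,\overline{Q}_k$.

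First I would set up a dictionary between views and distinguishing paths. Since $Q_{k+1}$ is a clique, all its nodes have the same degree, so $\cV^0$ is constant; and I would prove, by induction on the depth (the base case being the observation that the multiset of edges $(i,j)$ incident to a node is determined by its type $a,b,c,d$, and that nodes of different type have different incident multisets), that for every $t\ge 1$ two nodes $x,y$ satisfy $\cV^t(x)=\cV^t(y)$ if and only if there is no distinguishing path of length at most $t-1$ for $x$ and $y$. Setting $d(v)=\max_{w\neq v}\min\{\,|p| : p\text{ distinguishes }v\text{ and }w\,\}$, this dictionary gives $\lambda(Q_{k+1})=1+\min_v d(v)$ (using, for solvable graphs, the description of $\lambda$ as the least depth at which some node has a unique view, together with Proposition~\ref{possible}). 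So the goal reduces to showing $\min_v d(v)=k-1$, that is: (i) every pair of distinct nodes of $Q_{k+1}$ has a distinguishing path of length at most $k-1$; and (ii) for every node $v$ there is a node $w\neq v$ with no distinguishing path of length at most $k-2$ between them. Note that (i) also shows that all views are distinct at depth $k$, so $Q_{k+1}$ is solvable, which is what licenses the description of $\lambda$ just used.

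For the base cases I would check $Q_1$ (vacuous), $Q_2$ (the four nodes $a,b,c,d$ have a common depth-$0$ view but pairwise distinct depth-$1$ views, so $\lambda=1$), and $Q_3$ (the four type classes are exactly the depth-$1$ classes, and all views are distinct at depth $2$, as already recorded in the construction, so $\lambda=2$); these also anchor the induction. In the inductive step ($k\ge 3$) I would prove (i) by cases on the positions of $x$ and $y$ in the two copies. If they lie in the same copy and are not mirror images of one another, the induction hypothesis inside that copy yields a distinguishing path of length $\le k-2$, and it remains a distinguishing path in $Q_{k+1}$ because the colours of the monochromatic edges are inherited from $Q_k$ by Rule~2. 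If $x$ and $y$ lie in opposite copies, one monochromatic step sends the pair to a mirror‑related pair inside a single copy (again by Rule~2), reducing to the previous case at the cost of one extra edge. The delicate case is $y=\overline x$: here $x$ and $\overline x$ have identical views in the graph $Q_{k+1}$ with the top‑level skew edges removed, so any distinguishing path must use a skew edge; following the skew edge of port $2^{k+1}-2$ leads from $x$ to $\overline x$ but from $\overline x$ to a node of a different type (Rule~1, where $\beta_{k-3}\neq\gamma_{k-3}$), which gives a distinguishing path of length $1\le k-1$. A short bookkeeping check confirms that in all cases the length stays $\le k-1$.

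For (ii) I would take $w=\overline v$. By Rule~2 the copy‑swap is a port‑preserving automorphism of $Q_{k+1}$ with the top‑level skew edges deleted, so $v$ and $\overline v$ have identical views in that graph; hence any distinguishing path for $v,\overline v$ in $Q_{k+1}$ must traverse a top‑level skew edge, and immediately after such a traversal the two walks started at $v$ and at $\overline v$ again run in lock‑step through mirror‑related nodes of a $Q_k$‑copy, so they can diverge in type only by exploiting the internal asymmetry of $Q_k$, which by the induction hypothesis ($\lambda(Q_k)=k-1$, i.e.\ $\min d=k-2$ inside $Q_k$) requires $k-2$ further steps. Turning this ``one level deeper per skew edge'' picture into a proof — formally tracking the pair of type‑sequences produced along a path read from $v$ and from $\overline v$, and showing that they cannot first differ before a skew edge is followed by a length‑$(k-2)$ distinguishing path inside a copy — is the step I expect to be the main obstacle and the technical heart of the argument. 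Combining (i) and (ii) gives $\min_v d(v)=k-1$, hence $\lambda(Q_{k+1})=k$.
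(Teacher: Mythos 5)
Your framework (the dictionary between view equality and distinguishing paths, then an induction over the recursive construction, split into an upper bound (i) and a lower bound (ii)) is reasonable and close in spirit to the paper, but the proof has a genuine gap exactly where you flag it, and the gap is not just technical bookkeeping: part (ii) cannot be closed from the induction hypothesis you allow yourself. Knowing only that $\lambda(Q_k)=k-1$ tells you that \emph{some} node of $Q_k$ has a non-unique view at depth $k-2$; it says nothing about \emph{which} pairs agree to \emph{which} depth. After the walks from $v$ and $\overline v$ traverse the top-level skew edge they are no longer at a mirror-related pair (contrary to your ``lock-step through mirror-related nodes'' picture): they sit at a pair of nodes whose labels share the prefix $\alpha_{k-3}$ but differ in the next block, and what you need is that precisely such pairs have identical views up to depth $k-2$. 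That statement is strictly stronger than ``$\lambda(Q_k)=k-1$'', and this is why the paper strengthens the induction hypothesis to: if two nodes share a label prefix $\alpha_j$ but no prefix $\alpha_{j+1}$ ($j\le k-2$), their views agree to depth $j+1$ and differ at depth $j+2$; the inductive step first shows (via Rules 1 and 2) that all newly added edges map prefix-sharing pairs to prefix-sharing pairs, so they cannot occur on minimal distinguishing paths, and only then treats the mirror pair via the skew edge. Without some such prefix-indexed strengthening your induction does not close.

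There is also a concrete error in your upper bound (i). For $k\ge 3$, following port $2^{k+1}-2$ from $x=\alpha_{k-3}\beta_{k-3}$ and from $\overline x$ leads to $\overline x$ and to $\alpha_{k-3}\gamma_{k-3}$ respectively; these two nodes have the \emph{same} type (the type is the first letter of the label, and the labels differ only in the second half), so this is not a distinguishing path of length $1$. Indeed, if it were, the views of $x$ and $\overline x$ would differ at depth $2$, contradicting the very lemma you are proving (and your own claim (ii)); you appear to have generalized from the base case $Q_3$, where the ``second half'' of the label happens to be the type. The correct bound for the mirror pair is a distinguishing path of length exactly $k-1$: the skew step followed by a length-$(k-2)$ distinguishing path for the reached prefix-sharing pair — which again requires the refined statement. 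Finally, your reduction for opposite-copy pairs is misstated: a top-level monochromatic step maps an opposite-copy pair to another opposite-copy pair (Rule 2 swaps the copies coherently), not into a single copy; this case is easily repaired without any extra step, since the copy map preserves types and internal ports, so a distinguishing path for the corresponding pair inside $Q_k$ already distinguishes the mixed pair, but as written the case analysis does not go through.
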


\begin{proof}
{
We will prove a stronger statement. We will show that, for $k\ge 2$,
if  $j \in [0, k-2]$ 
and two nodes $u$ and $v$ in $Q_{k+1}$ are such that their labels have a common prefix $\alpha_j$ but no common prefix $\alpha_{j+1}$, then the views of $u$ and $v$ are identical up to depth $j+1$ and differ at depth $j+2$.

The base step of the inductive proof is for $k=2$. Indeed, node $aa$ in $Q_3$ has the same view as node $ab$ at depth $1$, while their views differ at depth $2$. The same happens for the views of nodes $bb$ and $ba$, $cc$ and $cd$, $dd$ and $dc$. The views of any other pair of nodes are different at depth 1.

Consider the construction of clique $Q_{k+1}$. Let $j \in[0, k-3]$ and consider two nodes $u$ and $v$ whose labels have a common prefix $\alpha_j$ but no common prefix $\alpha_{j+1}$.
Let $i\in[2^k-1,2^{k+1}-2]$, and consider the neighbors $u'$ and $v'$ (of $u$ and $v$ respectively) that are the other endpoints of the edges having port $i$ at $u$ and $v$. The labels of $u'$ and $v'$ have a common prefix $\beta_j$.
Indeed, if port $i$ is a port number of a monochromatic edge, 
rule 2 immediately implies that $u'$ and $v'$ have a common prefix $\beta_j$.
On the other hand, if $i$ is a port number of a skew edge then the first half of the label of $u'$  must coincide with that of $u$, and the first half of the label of $v'$ with that of $v$ in graph $Q_k$, which in turn implies that $u'$ and $v'$ have a common prefix $\beta_j=\alpha_j$ in graph $Q_{k+1}$.
It follows that edges with colors in $[2^k-1,2^{k+1}-2]$ cannot appear in any distinguishing path of minimal length between two nodes $u$ and $v$ whose labels have a common prefix of length up to $2^{k-3}$, and therefore the depth at which views of such nodes differ in $Q_{k+1}$ is the same as that of their corresponding nodes in $Q_k$.

It remains to consider the case of nodes $u\in Q_k$ and $\overline{u}\in \overline{Q}_k$, sharing a label prefix $\alpha_{k-2}$ in $Q_{k+1}$.
For any such pair of nodes, if $i\in [2^k-1,2^{k+1}-4]$ is the color of a monochromatic edge connecting $u$ to $\overline{v}$, then another monochromatic edge of color $i$ connects $\overline{u}$ to $v$. Any pair of such nodes have the same view (at any depth) in $Q_k$.
On the other hand, the skew edge $(2^{k+1}-2, 2^{k+1}-3)$ incident to $u$ and that incident to $\overline{u}$ lead, by rule 1, to two nodes having a common label prefix of length $\alpha_{k-3}$, but no common prefix $\alpha_{k-2}$
(the same reasoning applies to the skew edges $(2^{k+1}-3,2^{k+1}-2)$ incident to $\overline{u}$ and $u$). 
These nodes have identical views up to depth $k-2$ in $Q_k$, but distinct at depth $k-1$, by the inductive hypothesis. Hence $u$ and $\overline{u}$ have the same view up to depth $k-1$ and distinct views at depth $k$ in $Q_{k+1}$, which concludes the proof.
}
\end{proof}

\begin{remark} 
After the publication of the conference version of this paper
a more general version of Lemma~\ref{clique} has been proved in \cite{DKP13}:
for any $D \leq n$ there exists a $\Theta(n)$-node graph of diameter $\Theta(D)$
with level of symmetry $\Omega(D\log(n/D))$.
\end{remark}

We will also use the following family of cliques $\widetilde{Q}_k$.
$\widetilde{Q}_1$ is the clique on 2 nodes, with port number 0.
$\widetilde{Q}_2$ is a clique on 4 nodes, where all nodes have the same set of incident edges $\{(0,0),(1,1),(2,2)\}$.
{For $k\ge2$,  $\widetilde{Q}_{k+1}$ is a clique obtained from two disjoint copies of $Q_{k}$.
The construction of $\widetilde{Q}_{k+1}$ mimics the construction of $Q_{k+1}$ for all edges but the skew edges between nodes in $Q_k$ and $\overline{Q}_k$, that are replaced by monochromatic edges with port number $2^{k+1}-2$ at both endpoints, connecting nodes $u$ and $\overline{u}$, and by monochromatic edges with port number $2^{k+1}-3$ at both endpoints, connecting nodes $\alpha_{k-3}\alpha_{k-3}$ and $\overline{\alpha_{k-3}\beta_{k-3}}$ and nodes $\overline{\alpha_{k-3}\alpha_{k-3}}$ and $\alpha_{k-3}\beta_{k-3}$.}
Notice that, in graph $\widetilde{Q}_k$, nodes $x$ and $\overline{x}$ have identical views. Nevertheless, in order to describe our construction, we artificially assign to nodes $x$ and $\overline{x}$ the label they would respectively receive in the construction of graph $Q_{k}$.

We finally define a family of graphs that allow us to prove Theorem~\ref{lower-weak}.
For any pair of integers $(D, \lambda)$, with $D\ge2$  and $\lambda \ge 2$, the graph $R_{D,\lambda}$ is obtained using one copy of graph $Q_{\lambda+1}$ and $2D-1$ copies of graph $\widetilde{Q}_{\lambda +1}$. The construction of 
 graph $R_{D,\lambda}$ proceeds as follows.
Arrange $2D-1$ disjoint copies of $\widetilde{Q}_{\lambda +1}$ and one copy of $Q_{\lambda+1}$ in a cyclic order. Connect each node in a clique with all nodes in the subsequent clique.
{ 
Let $\{x, y\}$ be two nodes in $Q_{\lambda +1}$ 
and let $i$ be the color that would be assigned to edge edge $(x,\overline{y})$ in the construction of $\widetilde{Q}_{\lambda +2}$.
Assign port numbers $(i, i+2^{\lambda +1})$ to the edge connecting node $x'$ in some clique
to node $y''$ in the subsequent clique, where $x'$ has label $x$ and $y''$ has label $y$.}

A distinguishing path in $R_{D,\lambda}$ is defined in the same way as in $Q_{\lambda +1}$, which is possible, since in both graphs each node has type $a$, $b$, $c$, or $d$.

\begin{lemma}\label{ring}
Let $\lambda\ge2$ and $D\ge2$.
Let $x$ and $y$ be two nodes in $Q_{\lambda+1}$ and let $\ell \le \lambda -1$ be the maximum depth at which views of $x$ and $y$ in $Q_{\lambda+1}$
are identical.
Then the views at depth $\ell$  of nodes $x$ and $y$ belonging to the copy of $Q_{\lambda +1}$ in $R_{D, \lambda}$ are identical.
\end{lemma}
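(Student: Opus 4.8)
The plan is to prove, by induction on $m$, the following global strengthening of the lemma: for every $m\in\{0,1,\dots,\lambda-1\}$ and every two nodes $u,v$ lying in one common clique $C$ of $R_{D,\lambda}$ (so $C$ is the copy of $Q_{\lambda+1}$ or one of the $2D-1$ copies of $\widetilde Q_{\lambda+1}$), if $\cV^m_C(u)=\cV^m_C(v)$, with views taken in the standalone clique $C$, then $\cV^m(u)=\cV^m(v)$ holds in $R_{D,\lambda}$. The lemma is then the case $m=\ell$ with $C$ the copy of $Q_{\lambda+1}$, since by hypothesis $\cV^\ell(x)=\cV^\ell(y)$ in $Q_{\lambda+1}$ and $\ell\le\lambda-1$. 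The base case $m=0$ is trivial, because every node of $R_{D,\lambda}$ has the same degree $3\cdot 2^{\lambda+1}-1$: each clique has $2^{\lambda+1}$ nodes, hence $2^{\lambda+1}-1$ internal neighbours, plus a complete bipartite connection to each of its two neighbouring cliques.

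For the inductive step fix $m\ge 1$, a clique $C$, and $u,v\in C$ with $\cV^m_C(u)=\cV^m_C(v)$. Since $u$ and $v$ have equal degree, by the recursive definition of views it is enough to show that for every port $p$ the edges of port $p$ at $u$ and at $v$ carry the same port number at their respective other endpoints $u_p,v_p$, and that $\cV^{m-1}(u_p)=\cV^{m-1}(v_p)$ in $R_{D,\lambda}$. If $p$ is a port internal to $C$, then $u_p,v_p\in C$, and $\cV^m_C(u)=\cV^m_C(v)$ with $m\ge1$ already forces, by the recursive view definition inside $C$, that the two return ports agree and that $\cV^{m-1}_C(u_p)=\cV^{m-1}_C(v_p)$; the inductive hypothesis applied to $u_p,v_p$ in $C$ then gives $\cV^{m-1}(u_p)=\cV^{m-1}(v_p)$ in $R_{D,\lambda}$.

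The crucial case is when $p$ is a ring port, say joining $C$ to the following clique $C'$. By the port assignment of $R_{D,\lambda}$ the edge of port $p=i$ at $u$ carries port $i+2^{\lambda+1}$ at $u_p$, and likewise at $v$, so the return ports agree; it remains to handle the views. For this we use the labels: by the strengthened claim established in the proof of Lemma~\ref{clique} (for $C=Q_{\lambda+1}$; for $C=\widetilde Q_{\lambda+1}$, which is at least as symmetric as $Q_{\lambda+1}$ at every depth $\le\lambda-1$ and in which $z$ and $\overline z$ share half their label, the same conclusion holds), $\cV^m_C(u)=\cV^m_C(v)$ implies that the labels of $u$ and $v$ have a common prefix of length at least $2^{m-1}$, where $m-1\le\lambda-2$. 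The ring edges are coloured exactly like the monochromatic edges of $\widetilde Q_{\lambda+2}$, hence obey rule~2: the edge of a fixed colour $i$ leaving a node with prefix $\alpha_{m-1}$ ends at a node whose prefix $\gamma_{m-1}$ depends only on $i$ and on $\alpha_{m-1}$. Hence $u_p$ and $v_p$ have a common prefix of length at least $2^{m-1}$ inside $C'$, and by the easy direction (common prefix $\Rightarrow$ equal shallow views, again for $m-1\le\lambda-2$, and for $\widetilde Q_{\lambda+1}$ as well) we get $\cV^{m-1}_{C'}(u_p)=\cV^{m-1}_{C'}(v_p)$; the inductive hypothesis inside $C'$ closes the step. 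Ports joining $C$ to the preceding clique are handled symmetrically.

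The main obstacle is the label bookkeeping in the ring case: one must verify that the colouring imported from $\widetilde Q_{\lambda+2}$ genuinely preserves label prefixes in exactly the form used above, and that the prefix/view dictionary furnished by Lemma~\ref{clique} for $Q_{\lambda+1}$ — with the correct inequalities — transfers to the more symmetric cliques $\widetilde Q_{\lambda+1}$ and to pairs of nodes identified by the $z\leftrightarrow\overline z$ symmetry. A secondary point, and precisely why the induction is phrased uniformly over all cliques rather than along a single walk, is that $\ell$ may be much larger than $2D$, so a walk of length $\le\ell$ issued from $x$ can wrap around the ring and re-enter cliques already visited, including the copy of $Q_{\lambda+1}$ itself; the global induction absorbs this automatically, whereas the intuitive ``no short distinguishing path'' picture would have to rule out such detours by hand.
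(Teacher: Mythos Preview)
Your approach is sound and genuinely different from the paper's. The paper argues by contradiction via \emph{path folding}: it assumes a short distinguishing path for $x,y$ exists in $R_{D,\lambda}$, then converts it into a distinguishing path of no greater length inside $Q_{\lambda+1}$ by deleting inter-clique edges of colour $2^{\lambda+2}-2$ (which become self-loops) and shifting the remaining inter-clique colours down by $2^{\lambda+1}$ so that they become internal edges of a single clique; finally it checks that the only edges whose behaviour changes under this projection (those hitting the last skew level) still land on nodes with the right prefix, so the folded path still distinguishes. Your argument instead runs a forward induction on depth, carrying the invariant ``equal clique-view $\Rightarrow$ equal $R_{D,\lambda}$-view'' uniformly over all cliques, and uses the prefix--view dictionary of Lemma~\ref{clique} together with rule~2 to push the invariant through ring edges one step at a time. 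The paper's projection is shorter once the colour arithmetic is set up; your approach is more modular and makes explicit exactly which property of the ring edges is used, and it handles the wrap-around issue you mention automatically.

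The one place that needs more than the sentence you give it is the dictionary for the copies $C=\widetilde Q_{\lambda+1}$: you need both directions (equal depth-$m$ views $\Rightarrow$ common prefix $\alpha_{m-1}$, and conversely) for $m\le\lambda-1$, and ``at least as symmetric'' only gives the easy direction. The missing direction does hold: for $j\le\lambda-3$, two nodes with common prefix exactly $\alpha_j$ can (after possibly replacing one by its $\overline{\,\cdot\,}$-twin, which has the same view in $\widetilde Q_{\lambda+1}$ and the same first-half label) be taken in the same half $Q_\lambda$, where Lemma~\ref{clique} furnishes a distinguishing path of length $j+2$ using only ports $<2^{\lambda}-1$; this path lives verbatim in $\widetilde Q_{\lambda+1}$. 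Together with the observation that the ports $2^{\lambda+1}-3$ and $2^{\lambda+1}-2$ in $\widetilde Q_{\lambda+1}$ still preserve prefixes $\alpha_j$ for $j\le\lambda-3$, this closes the gap. With that verification added, your induction goes through.
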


\begin{proof}
{
It follows from the definition that the length of a shortest distinguishing path for nodes $x$ and $y$ in $Q_{\lambda +1}$  is $\ell$.
By construction of $Q_{\lambda+1}$, nodes $x$ and $y$ have labels with an identical prefix of length $2^{\ell +1}$.
Suppose, for contradiction, that  views of $x$ and $y$ at depth $\ell$ are different in $R_{D, \lambda}$.
Hence a shortest distinguishing path for these nodes has length $t<\ell$ in $R_{D, \lambda}$. Let $p$ be a distinguishing path of length $t$ in  $R_{D, \lambda}$ for nodes $x$ and $y$.
We will show how to construct a distinguishing path $p'$ of length at most $t$ in $Q_{\lambda+1}$, which will give a contradiction.


Edge colors $\{0,1,2\}$ inside each clique $\widetilde{Q}_{\lambda+1}$ are assigned according to the rules used for cliques $Q_{\lambda +1}$.
Monochromatic edges with port number $i$ at both endpoints get color $i$.
For edges connecting nodes from different cliques in $R_{D,\lambda}$ we assign as color the smaller of their port numbers.
(Notice that edges connecting nodes with the same label in their cliques will get the same color $2^{\lambda +2}-2$ but these edges will be subsequently deleted in the construction of the distinguishing path.) 
For the remaining edges (i.e., skew edges in the construction of a graph $Q_k$, for some $k\le \lambda$), the color is defined by the outgoing port number, according to path $p$.

The distinguishing path $p'$ is constructed as follows.
Consider all edges $e$ in $p$, in reverse order.
If  $e$ has color $h= 2^{\lambda +2}-2$ then we remove it; if $e$ has color $h$ such that $2^{\lambda +2}-3 \geq h \geq 2^{\lambda +1}-1$ then we replace it with an edge with color $h-2^{\lambda +1}$.
This corresponds to replacing edges going from one copy of $\widetilde{Q}_{\lambda+1}$ or from $Q_{\lambda+1}$ to another copy of $\widetilde{Q}_{\lambda+1}$ or to $Q_{\lambda}$ with internal edges of some clique $\widetilde{Q}_{\lambda+1}$ or of $Q_{\lambda +1}$. Deleted edges are those going to corresponding nodes of different cliques, hence their deletion corresponds to removing self loops.
%
%
We show that path $p'$ is distinguishing for nodes $x$ and $y$ in $Q_{\lambda +1}$.

Indeed, the only edge replacements that could modify the sequence of $a$, $b$, $c$, and $d$ types yielded by paths $p$ and $p'$, when starting from nodes $x$ and $y$, are those of edges with color and $2^{\lambda +2}-3$, as these edges may be replaced by skew edges of the last step of the construction of the clique $Q_{\lambda +1}$, that are defined differently in cliques $\widetilde{Q}_{\lambda +1}$.
For each such edge leading to a node with label $\alpha_{\lambda -2}\beta_{\lambda -2}$, the corresponding skew edge in $Q_{\lambda +1}$ leads to a node with label $\alpha_{\lambda -2}\gamma_{\lambda -2}$.
Hence, views in $Q_{\lambda +1}$ of nodes $\alpha_{\lambda -2}\beta_{\lambda -2}$ and $\alpha_{\lambda -2}\gamma_{\lambda -2}$ are identical up to depth $\lambda -1$, as shown in the proof of Lemma~\ref{clique}, and thus the path $p'$, of length at most $t$, is distinguishing for $x$ and $y$ in $Q_{\lambda +1}$, contradiction.}
\end{proof}

{\bf Proof of Theorem~\ref{lower-weak}:}
{ Lemma \ref{small} below proves the theorem if either $D$ or $\lambda$ are less than 2. 
Here we give the general argument for
$D, \lambda \ge 2$. 
Consider the clique $\widetilde{Q}_{\lambda + 1}$ antipodal to the  clique $Q_{\lambda +1}$ in graph $R_{D,\lambda}$.
Consider nodes $x$ and $\overline{x}$ from this clique.
Any distinguishing path for nodes $x$ and $\overline{x}$ in $R_{D,\lambda}$ must contain a node from $Q_{\lambda +1}$.
Let $q$ be a minimum length distinguishing path for nodes $x$ and $\overline{x}$ and assume without loss of generality that $y$ and $\overline{y}$ are the first nodes from $Q_{\lambda+1}$ found along path $q$, if starting from $x$ and $\overline{x}$, respectively.
By Lemmas \ref{clique} and \ref{ring}, nodes $y$ and $\overline{y}$ have the same views at depth $\lambda -1$ and different views at depth $\lambda$
in the graph $R_{D,\lambda}$. Thus the minimum length distinguishing path in $R_{D,\lambda}$ for $y$ and $\overline{y}$ has length $\lambda - 1$.
Since nodes $y$ and $\overline{y}$ are at distance $D$ from $x$ and $\overline{x}$, respectively, the views at depth $D+\lambda -1$ of $x$ and $\overline{x}$
are identical.

 The following lemma proves Theorem \ref{lower-weak} in the case when either $D$ or $\lambda$ are small, thus concluding the proof of Theorem~\ref{lower-weak}.}

\begin{lemma}\label{small}
For $D=1$ and any $\lambda \geq 1$, and for any $D \geq 2$ and $0 \leq \lambda \leq 1$, 
there exists an integer $n$ and a solvable graph $G$ of size $n$, diameter $D$ and level of symmetry $\lambda$,
such that every algorithm for weak LE, working for the class of graphs of size $n$, diameter $D$ and level of symmetry $\lambda$, takes time at least $D+\lambda$
on the graph $G$.
\end{lemma}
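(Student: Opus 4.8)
The plan is to handle the remaining corner cases of Theorem~\ref{lower-weak} by exhibiting small explicit graphs, one family for each degenerate regime, so that the general ring construction $R_{D,\lambda}$ (which required $D,\lambda\ge 2$) need not be invoked. Concretely, I would split into three cases: (i) $D=1$ with $\lambda\ge 1$; (ii) $D\ge 2$ with $\lambda=0$; (iii) $D\ge 2$ with $\lambda=1$. In each case it suffices to produce a solvable graph $G$ with the prescribed parameters and two nodes $x,y$ whose views agree up to depth $D+\lambda-1$ (equivalently, a shortest distinguishing path has length at least $D+\lambda$), since then no weak LE algorithm can have the node $x$ commit to a correct leader before round $D+\lambda$: the information available to $x$ after $t<D+\lambda$ rounds is exactly $\cV^t(x)=\cV^t(y)$, so $x$ cannot distinguish its situation in $G$ from the situation of $y$ in $G$, and any correct leader must have a unique view, forcing a wrong output path from at least one of $x,y$ if either stops early.

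For case (i), $D=1$: here $G$ is a clique, so the diameter constraint forces the graph to have at least two nodes; since $\lambda\ge 1$ the graph must be solvable with level of symmetry exactly $\lambda$, i.e.\ all views become distinct only at depth $\lambda$ but not before. This is exactly what the family $Q_{\lambda+1}$ delivers by Lemma~\ref{clique}: it is a clique (hence diameter $1$), it is solvable, and its level of symmetry is $\lambda$. Taking $x,y$ to be a pair of nodes whose labels share a prefix of length $2^{\lambda-1}$ but not $2^{\lambda}$ (e.g.\ the pair realizing the level of symmetry in the inductive statement inside Lemma~\ref{clique}), their views agree to depth $\lambda-1=D+\lambda-1$ and differ at depth $\lambda$, which is precisely the required lower bound.

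For cases (ii) and (iii), $D\ge 2$ and $\lambda\in\{0,1\}$: the idea is to take a short cycle-like arrangement of cliques so that the diameter is $D$ while keeping the local symmetry trivial or almost trivial. For $\lambda=0$ I would take $2D-1$ (or $2D$) copies of a small clique $\widetilde{Q}_2$ arranged cyclically, connected block-to-block exactly as in the $R_{D,\lambda}$ construction but with one distinguished block replaced by an asymmetric gadget of a unique degree — this gives a node of unique degree (so $\lambda=0$), a solvable graph, diameter $\Theta(D)$ which can be tuned to exactly $D$ by choosing the number of blocks, and a pair of antipodal nodes $x,\overline x$ in the block farthest from the asymmetric gadget whose views agree up to depth $D-1$ because any distinguishing path must travel to the gadget and back-start, costing $D$ steps. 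For $\lambda=1$, I would instead make the distinguished block have level of symmetry $1$ (a clique where some node already has a unique view at depth $1$, but the two antipodal nodes in the far block still require reaching the distinguished block), yielding distinguishing-path length $D+1-1=D$ for those antipodal nodes while giving $\lambda=1$ globally; again the number of blocks is chosen so the diameter is exactly $D$.

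The main obstacle I anticipate is the bookkeeping for cases (ii)--(iii): I must (a) verify that gluing cliques cyclically actually yields diameter exactly $D$ rather than $\Theta(D)$, which requires a precise count of blocks and a check that within-block and between-block distances are all $1$ so that distance equals the block-distance in the cycle; (b) verify that inserting the asymmetric (or $\lambda=1$) gadget does not accidentally lower the distinguishing-path length between the far antipodal pair — i.e.\ confirm that \emph{every} distinguishing path for that pair genuinely has to visit the gadget, which is where one reuses the view-identity arguments of Lemmas~\ref{clique} and~\ref{ring} (the antipodal pair have identical views in each symmetric block, so no distinguishing path can terminate before reaching the gadget); and (c) confirm solvability and that the level of symmetry is exactly $\lambda$ and not smaller. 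Once these are pinned down, the lower bound $D+\lambda$ follows from the same indistinguishability argument as in the general case, completing the proof of Theorem~\ref{lower-weak}.
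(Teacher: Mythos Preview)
Your overall decomposition into three cases mirrors the paper's, and your handling of cases (ii) and (iii) is close in spirit to what the paper does (for $\lambda=1$, $D\ge 2$ the paper indeed uses one copy of $Q_2$ and $2D-1$ copies of $\widetilde Q_2$ cyclically; for $\lambda=0$, $D\ge 2$ it uses an explicit small graph rather than a block cycle, but your plan is workable).

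However, your case (i) contains a genuine error. You write that in $Q_{\lambda+1}$ the chosen pair has views agreeing ``to depth $\lambda-1=D+\lambda-1$''. With $D=1$ this arithmetic is wrong: $D+\lambda-1=\lambda$, not $\lambda-1$. To force running time at least $D+\lambda=1+\lambda$ you need two nodes whose views coincide at depth $\lambda$, but Lemma~\ref{clique} (more precisely the strengthened statement inside its proof) shows that in $Q_{\lambda+1}$ \emph{every} pair of distinct nodes already has distinct views at depth $\lambda$; the maximum agreement depth attained is $\lambda-1$. Hence $Q_{\lambda+1}$ by itself only yields a lower bound of $\lambda$, one short of what is claimed.

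The paper fixes this by enlarging the clique: it glues $Q_{\lambda+1}$ to a copy of $\widetilde Q_{\lambda+1}$ using the $\widetilde Q_{\lambda+2}$ edge colors. In the $\widetilde Q_{\lambda+1}$ half there is a pair $x,\overline{x}$ with identical views, and any distinguishing path for them must first cross into the $Q_{\lambda+1}$ half (one step) and then distinguish the endpoints there (at least $\lambda$ more steps, by Lemma~\ref{clique}), giving agreement at depth $\lambda$ and hence the required $1+\lambda$ lower bound; meanwhile the nodes originating in $Q_{\lambda+1}$ still witness level of symmetry exactly $\lambda$. You will need some such two-block construction (a diameter-$1$ analogue of $R_{D,\lambda}$), not the bare clique $Q_{\lambda+1}$.
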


\begin{proof}
Consider the following cases.

\noindent
Case 1. $\lambda=0$ and $D \geq 2$.

\begin{figure*}
\begin{center}
\scalebox{0.4}{
\input{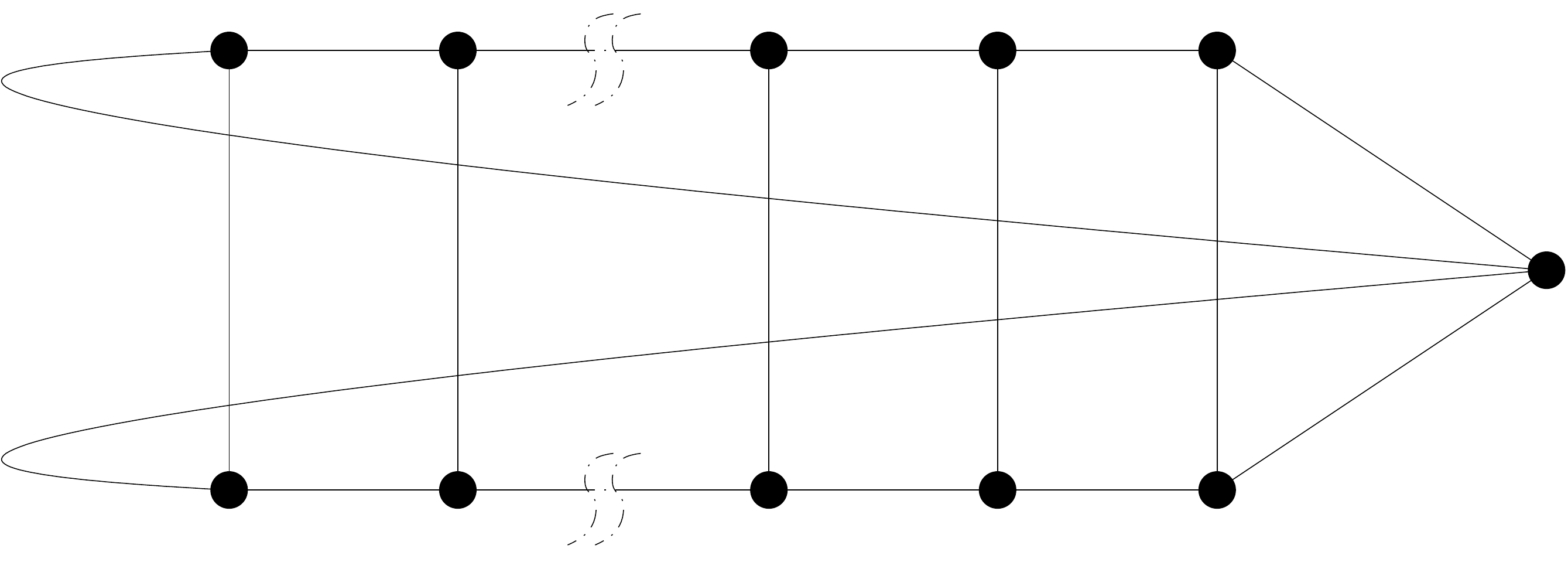_t}}
\caption{\label{fig.lambda0} Graph $G$ of diameter $D\ge1$ and level of symmetry 0.}
\end{center}
\end{figure*}

The required graph $G$ is depicted in Fig.~\ref{fig.lambda0}. The lower bound $D$ on the time of weak LE is straightforward.

\noindent
Case 2. $\lambda =1$ and $D \geq 2$.

The construction of the graph $G$ is analogous to the general construction of $R_{D,\lambda}$. We take one copy of clique $Q_2$ and $2D-1$ copies
of clique  $\widetilde{Q}_2$ in circular order. In copies of cliques $\widetilde{Q}_2$ we arbitrarily assign labels $a$, $b$, $c$, and $d$ to nodes. The argument
remains the same as in the general case. 

\noindent
Case 3. $\lambda \geq 1$ and $D= 1$.

Consider the clique constructed from cliques $Q_{\lambda +1}$  and $\widetilde{Q}_{\lambda + 1}$ by connecting each node with label $x$ in $Q_{\lambda +1}$
to each node with label $y$ in $\widetilde{Q}_{\lambda + 1}$ by an edge 
having the color that would be assigned to edge $(x,\overline{y})$ in the construction of $\widetilde{Q}_{\lambda+2}$.
The obtained clique requires time $1+\lambda$ for weak LE.
\end{proof}

\section{Strong leader election}

For strong leader election more knowledge is required to accomplish it, and even more knowledge is needed to perform it fast. We first prove that knowledge of
the diameter $D$ is not sufficient for this task. The idea is to show, for sufficiently large $D$, one solvable and one non-solvable graph of diameter $D$, such that
both graphs have the same sets of views.

\begin{theorem}\label{impos}
For any $D\geq  4$ there is no strong LE algorithm working for all
graphs of diameter~$D$.
\end{theorem}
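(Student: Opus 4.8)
The plan is to produce, for every $D\geq 4$, a \emph{solvable} graph $G_1$ and a \emph{non-solvable} graph $G_2$, both of diameter exactly $D$, that have the same set of views, i.e.\ $\{\cV(v):v\in G_1\}=\{\cV(w):w\in G_2\}$. Once such a pair is in hand, the impossibility follows at once from a standard indistinguishability argument, which I would present first and which is the easy half.

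\emph{Indistinguishability.} In the $\loc$ model with anonymous nodes and labeled ports, the state of a node $v$ after $t$ rounds of any fixed deterministic algorithm is a function of $\cV^t(v)$ together with the common input (here the announced value $D$), since $\cV^t(v)$ is precisely the information $v$ can gather in $t$ rounds. By induction on $t$ it follows that the whole execution of a fixed deterministic algorithm at $v$ --- the messages it sends in each round, whether and when it halts, and what it outputs --- is determined by $\cV(v)$ and $D$. Now suppose $\mathcal A$ solves strong LE for all graphs of diameter $D$. Both $G_1$ and $G_2$ belong to that class, so $\mathcal A$ is run on them with the same input $D$. Running $\mathcal A$ on $G_1$: since $G_1$ is solvable, every node halts having elected a leader, and in particular no node ever declares that leader election is impossible. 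Running $\mathcal A$ on $G_2$: every node $w$ of $G_2$ has a view that also occurs in $G_1$, so by the above $w$ behaves exactly as that node of $G_1$ --- it elects a leader and never declares impossibility. But $G_2$ is non-solvable, so correctness of $\mathcal A$ forces \emph{every} node of $G_2$ eventually to declare impossibility and stop; contradiction. Hence no such $\mathcal A$ exists.

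\emph{Construction.} It remains to build the pair $(G_1,G_2)$. Note first that $G_2$ is forced to be a proper connected cover of $G_1$: since $G_1$ is solvable its views are pairwise distinct (Proposition~\ref{possible}), so sending each node $w$ of $G_2$ to the unique node of $G_1$ with the same view gives a port-preserving covering map $\pi:G_2\to G_1$; $\pi$ is surjective because $G_1$ is connected, all its fibres have a common size $\sigma$, and $\sigma=1$ would give $G_2\cong G_1$ and hence $G_2$ solvable, so $\sigma\geq 2$ and each view occurs at least twice in $G_2$. Accordingly I would take $G_1$ to be a solvable graph of diameter $D$ built from the clique gadgets $Q_k,\widetilde Q_k$ and the ring $R_{D,\lambda}$ of Section~3 (Lemmas~\ref{clique} and~\ref{ring}): a ring of cliques with one distinguished clique breaking all symmetries so that $G_1$ is solvable, the ring length tuned so that $\mathrm{diam}(G_1)=D$, and arranged moreover so that every node lies on a short odd cycle and every diametral pair is joined by several (of opposite type) geodesics. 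Then $G_2$ would be the connected double cover of $G_1$ associated with a sign assignment $s:E(G_1)\to\{0,1\}$ that is not a coboundary (so $G_2$ is connected, hence non-solvable by Proposition~\ref{possible}, each view appearing twice) but is supported only on short cycles placed near every node, so that the two sheets of $G_2$ can be joined cheaply from everywhere.

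\emph{The hard part.} The delicate point --- and the source of the hypothesis $D\geq 4$ --- is keeping $\mathrm{diam}(G_2)=D$, because a double cover tends to increase the diameter (it can never decrease it, since $\pi$ does not decrease distances). Indeed, for a diametral pair $u,v$ of $G_1$ the two lifts of $v$ are both at distance $\geq D$ from a fixed lift of $u$, and unless $u$ and $v$ are joined by two geodesics of opposite $s$-sign one of these distances is $\geq D+1$; similarly $\mathrm{diam}(G_2)\geq D$ forces every node of $G_1$ to lie on an $s$-nontrivial closed walk of length $\leq D$. The construction must make both conditions hold for all pairs and all nodes simultaneously, while still keeping $G_1$ solvable and of diameter exactly $D$; carrying this out --- essentially a case analysis of distances in $G_2$ according to the positions of the two endpoints relative to the support of $s$ and to the distinguished clique, with the ring taken long enough to absorb the unavoidable parity-correcting detours, whence $D\geq 4$ --- is the technical heart of the argument.
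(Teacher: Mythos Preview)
Your high-level plan is exactly the paper's: exhibit a solvable graph $G_1$ and a non-solvable connected double cover $G_2$, both of diameter precisely $D$, and conclude by the indistinguishability argument you wrote (which is fine). You also correctly isolate the only real difficulty, namely keeping $\mathrm{diam}(G_2)=D$.

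Where your proposal diverges from the paper is in the construction, and this is where it remains incomplete. You reach for the heavy machinery of Section~3 --- the cliques $Q_k,\widetilde Q_k$ and the ring $R_{D,\lambda}$ --- and then admit that the diameter case analysis is still to be done. The paper avoids all of this. Its $G_1$ (called $T_k$) is simply a $2\times 2k$ torus with each node of one ring additionally joined to the two ``diagonal'' neighbours in the other ring, plus a single extra vertex $z$ of degree~$2$ pending from one pair $u,u'$. The unique degree-$2$ vertex makes $T_k$ solvable immediately; the diameter is $k+1$ by inspection. The cover $G_2$ (called $M_k$) is obtained by taking two copies and swapping the diagonal edges between copies while keeping port numbers, exactly the signed double cover you describe with the sign supported on the diagonal edges. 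Because the original torus edges stay within each copy and the swapped diagonals let you change copies in one step from any ring vertex, the diameter computation for $M_k$ is a half-page direct check (distances within a copy are $\leq k+1$ via the torus; crossing copies costs $1$ via a diagonal; the two copies of $z$ are at distance $4\leq k+1$ for $k\geq 3$), yielding $\mathrm{diam}(M_k)=k+1$ as well. This is where $D=k+1\geq 4$ enters.

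So the gap in your write-up is not conceptual but concrete: you have not produced a graph and a sign assignment, and the gadgets you point to are both unnecessary and would make the diameter verification much harder than it needs to be. Replacing them with the torus-plus-pendant construction turns the ``technical heart'' you flag into a short explicit calculation.
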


\begin{figure}
\begin{center}
\begin{tabular}{cc}
(a)&\scalebox{0.25}{
\input{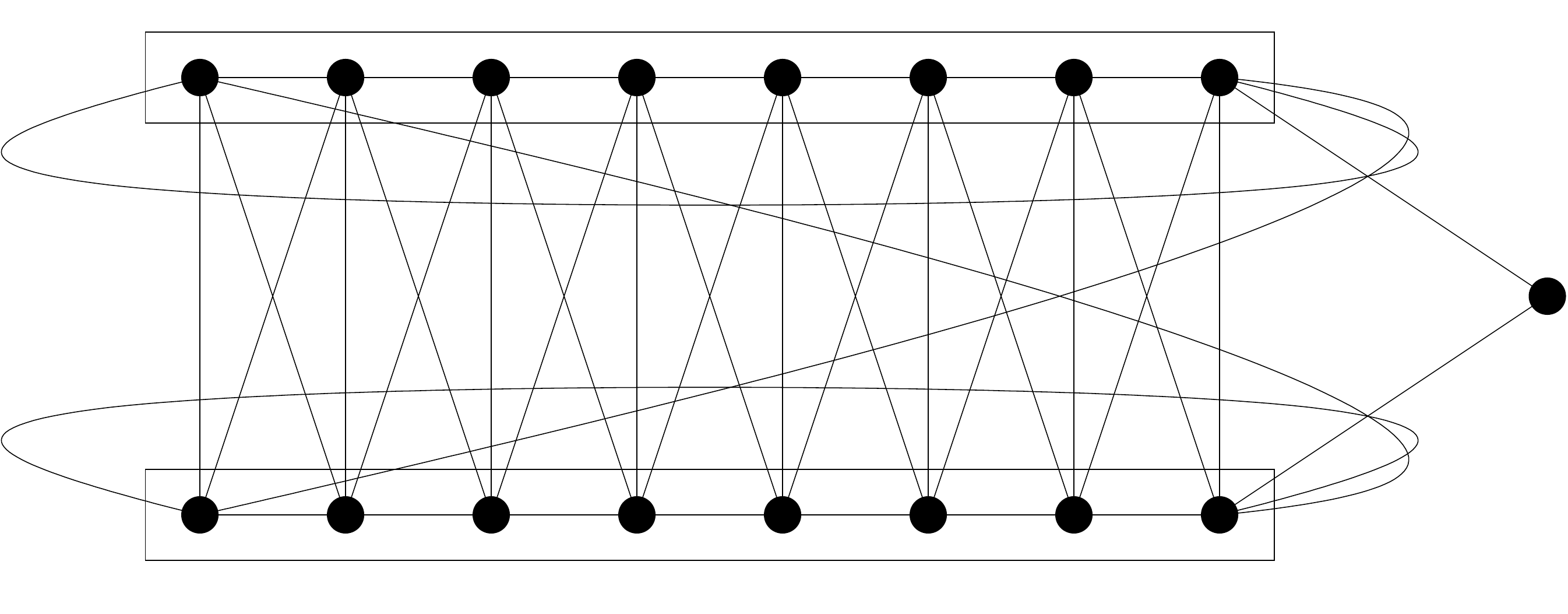_t}}\\\\\hline\\
(b)&\scalebox{0.25}{
\input{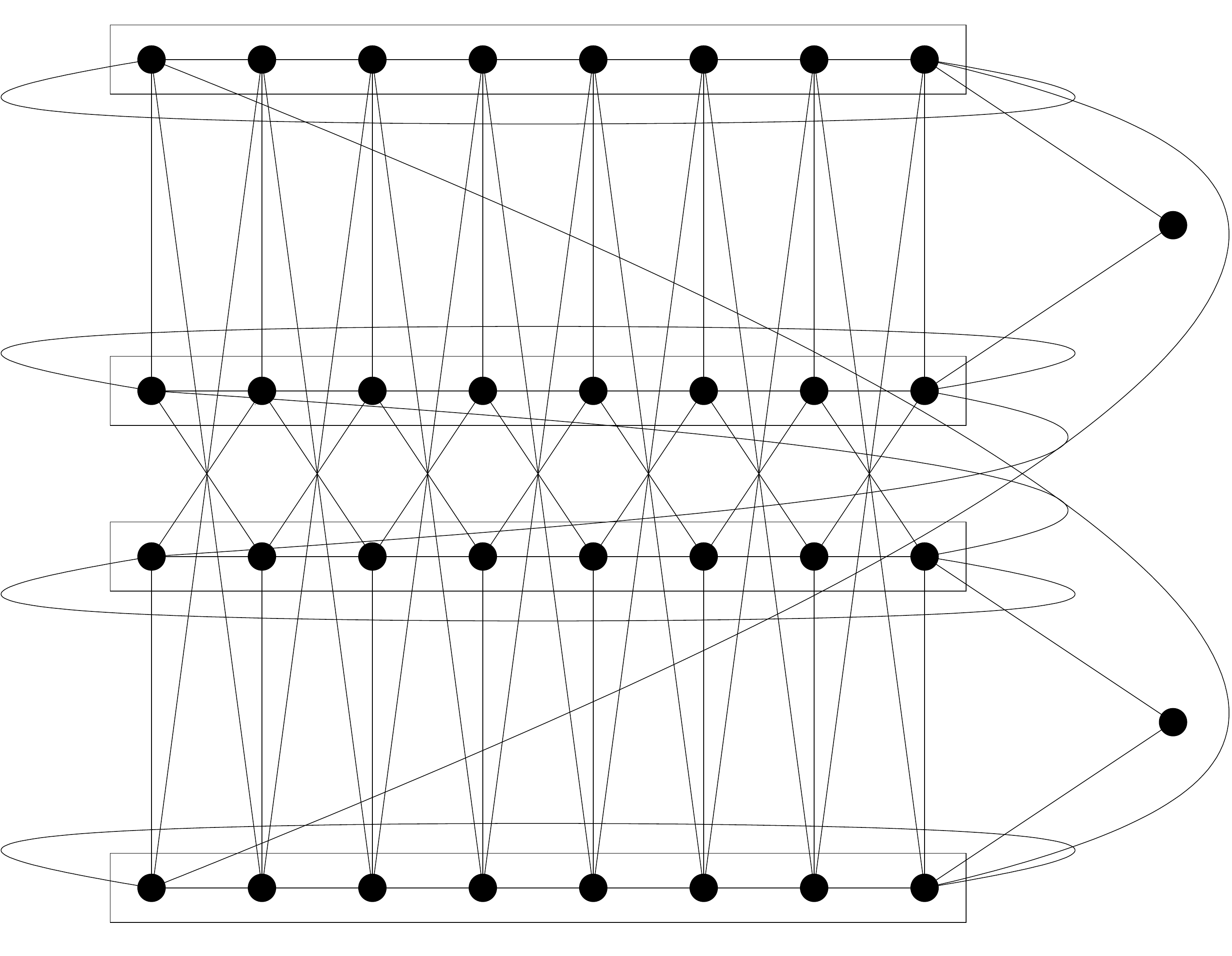_t}}
\end{tabular}
\caption{\label{fig.T-k} a) The graph $T_4$. b) The graph $M_4$.}
\end{center}
\end{figure}

\begin{proof}
Let $k\geq 3$.
We will use the following family of graphs $T_k$ {(see also Fig.~\ref{fig.T-k}.a)}.
Consider a $2 \times 2k$ torus $\tau$.
Let $R$ and $R'$ be the two rings of size $2k$ in $\tau$.
For each node $u\in R$, let $u'$ be its unique neighbor from $R'$ in $\tau$.
The graph $T_k$ is obtained by connecting each node $u\in R$ with nodes
$v'$ and $w'$ in $R'$, where $v$ and $w$ are the two neighbors of $u$
in $R$.
An additional node $z$ of degree 2, connected to a pair of nodes $u$,
$u'$, completes the construction of $T_k$.
The assignment of port numbers can be performed arbitrarily and is thus
omitted in the construction.

Consider the following family $M_k$ of graphs {(see also Fig.~\ref{fig.T-k}.b)}.
The graph $M_k$ is obtained from two copies of the graph $T_k$. Let us call
$T_k$ one copy, and $\overline{T}_k$, the other. Similarly we call
$\overline{u}$ the copy in $\overline{T}_k$ of node $u\in T_k$.
The graph $M_k$ is obtained by removing all edges $\{u, v'\}$  and
$\{\overline{u}, \overline{v'}\}$
from $T_k$ and $\overline{T}_k$, and replacing them with edges $\{u,
\overline{v'}\}$ and $\{\overline{u}, v' \}$,
maintaining the same port numbers.

Since the graph $T_k$ has exactly one node of
degree 2, it is solvable for any assignment of port numbers.
By the construction of graph $M_k$, each node $u$ from one copy of $T_k$
has the same view as the corresponding node $\overline{u}$ from the
other copy. It follows that graph $M_k$ is not solvable.
To prove the theorem, it is thus enough to show that graphs $T_k$ and
$M_k$ have the same diameter. Indeed, this will imply that no algorithm knowing only the diameter can tell apart
some solvable graph from an unsolvable one.

The diameter of graph $T_k$ is $k+1$. Indeed, $k+1$ is the distance
from the unique node $z$  of degree two to the two nodes in rings
$R$ and $R'$ that are antipodal to the neighbors of $z$.
All other pairs of nodes are at distance at most $k$.
Consider the graph $M_k$.
Since no edge from the original torus in the copies of $T_k$ is
modified in the construction of $M_k$, we have that all pairs of nodes
from the copy $T_k$, as well as all pairs from the copy
$\overline{T}_k$ are within distance $k+1$ from each other.
By using one edge $\{u,\overline{u'}\}$ each node $u\in R$ can
reach any node $\overline{v'}$ in $\overline{R'}$ within $k$ steps
(similarly for $u'\in R'$). 
Nodes $u\in R$ and $\overline{u}\in \overline{R}$ are at distance 3  in $M_k$, for any $k\geq 3$.
The two copies of the
node of degree two are at distance 4 in $M_k$, for any 
$k \geq 3$. Hence, for any $k \geq 3$,
all pairs of nodes in $M_k$
are within distance $k+1$ from each other. 
On the other hand, the distance in $M_k$ from a node $u$ in $R$ and a node $\overline{x}$ in $\overline{R}$, where $x$ is the antipodal node of $u$ in $R$, is equal to $k+1$. 
Hence the diameter of $M_k$ is $k+1$, which concludes the proof.
\end{proof}

By contrast, knowledge of the size $n$ alone is enough to accomplish strong leader election, but (unlike for weak LE), it may be slow.
We will show that optimal time for strong LE is $\Theta(n)$ in this case. We first show Algorithm~\ref{alg:SLEknownSize}, working in time $O(n)$.

\begin{algorithm*}
\caption{SLE-known-size$(n)$\label{alg:SLEknownSize}}

{\bf for} $i:=0$ {\bf to } $2n-3$ {\bf do} $COM(i)$\\
$L:=$ the set of nodes in $\cV^{2n-2}(u)$ at distance at most $n-1$ from $u$ (including $u$)\\
$num:=$ the number of nodes in $L$ with distinct views at depth $n-1$\\
{\bf if} $num<n$ {\bf then} report ``LE impossible''\\
{\bf else}\\
\hspace*{1cm}$V:=$ the set of nodes $v$ in $\cV^{2n-2}(u)$ having the smallest $\cV^{n-1}(v)$\\
\hspace*{1cm}elect as leader the node in $V$ having the lexicographically smallest path from $u$

\end{algorithm*}
%
%
%
%
%
%
%

\begin{theorem}\label{strong-size}
{Algorithm~\ref{alg:SLEknownSize} - SLE-known-size$(n)$ -} performs strong LE in the class of graphs of size $n$, in time $O(n)$.
\end{theorem}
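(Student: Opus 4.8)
The plan is to verify that Algorithm~\ref{alg:SLEknownSize} is correct and runs in time $O(n)$ by analysing what each node learns after $2n-2$ rounds of communication. First I would observe that after $2n-3$ invocations of $COM$, plus the round where the view at depth $2n-2$ is assembled, every node $u$ holds $\cV^{2n-2}(u)$, so the total running time is $2n-2 \in O(n)$. It remains to show correctness, i.e., that the decision rule ($num < n$ iff the graph is unsolvable) is sound and that, in the solvable case, all nodes elect the same leader.

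The key structural fact I would use is that $\cV^{2n-2}(u)$ contains, for every node $w$ at distance at most $n-1$ from $u$ (i.e., every node of the graph, since $D \le n-1$), a subtree rooted at a copy of $w$ that is isomorphic to $\cV^{n-1}(w)$. Hence from $\cV^{2n-2}(u)$ node $u$ can read off, for each node $w$ in the graph, the truncated view $\cV^{n-1}(w)$. By Proposition~\ref{trunc}, $\cV^{n-1}(w) = \cV^{n-1}(w')$ if and only if $\cV(w) = \cV(w')$, so the number $num$ of distinct values of $\cV^{n-1}(\cdot)$ among the nodes in $L$ equals $|\Pi|$, the number of equivalence classes of the relation $\sim$. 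Since all classes of $\Pi$ have equal size $\sigma$ and $\sigma \cdot |\Pi| = n$, we get $num = n$ exactly when $\sigma = 1$, i.e., by Proposition~\ref{possible}, exactly when the graph is solvable. This justifies the ``if $num < n$ then report impossible'' branch.

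For the solvable case I would argue that when $num = n$, every node $v$ in the graph has a unique view, so in particular there is a unique node $v^\star$ minimising $\cV^{n-1}(v)$ in the canonical order on trees. Every node $u$ sees all the values $\cV^{n-1}(v)$ inside $\cV^{2n-2}(u)$, hence computes the same set $V = \{v^\star\}$ (as a set of occurrences of $v^\star$ in its view); breaking ties by the lexicographically smallest port-path from $u$ picks out a concrete path to $v^\star$, and since $v^\star$ is the same graph node for every $u$, all nodes elect the same leader, and each non-leader obtains the required path of ports to it. One subtlety to be careful about: I must confirm that the set $L$ of nodes at distance at most $n-1$ from $u$ in $\cV^{2n-2}(u)$ really corresponds to all graph nodes (this uses $D \le n-1$, which holds for any connected $n$-node graph) and that distinct graph nodes at this distance with different views show up as distinct values of $\cV^{n-1}(\cdot)$ while identical-view nodes may be double-counted in $L$ but are collapsed by the ``distinct views'' count so $num$ is still exactly $|\Pi|$.

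The main obstacle, and the step that needs the most care, is the correctness of the count $num$: I must ensure that truncating at depth $n-1$ is enough to detect solvability (this is exactly Proposition~\ref{trunc}) and that reading these truncated views out of the depth-$2n-2$ view of $u$ is legitimate — i.e., that a subtree of $\cV^{2n-2}(u)$ rooted at a node reached by a length-$t$ path (with $t \le n-1$) is genuinely isomorphic to $\cV^{2n-2-t}(w) \supseteq \cV^{n-1}(w)$ for the corresponding graph node $w$. Everything else (the timing bound, the election rule agreeing across nodes) follows routinely once this is in place.
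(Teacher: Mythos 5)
Your proposal is correct and follows essentially the same route as the paper's proof: use the fact that $\cV^{2n-2}(u)$ contains $\cV^{n-1}(v)$ for every node $v$, invoke Propositions~\ref{possible} and~\ref{trunc} to show that $num<n$ exactly characterizes unsolvability, and note that in the solvable case the election rule picks a unique node, just as in the weak LE algorithms. Your additional detour through the equal class size $\sigma$ is harmless but not needed, and the count of $COM$ invocations is $2n-2$ (for $i=0,\dots,2n-3$) rather than $2n-3$, which does not affect the $O(n)$ bound.
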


\begin{proof}
After $2n-2$ rounds of communication every node $u$ has view  $\cV^{2n-2}(u)$, which contains views $\cV^{n-1}(v)$ of all nodes $v$.
A copy of each node appears in the set $L$.
By Propositions \ref{possible} and \ref{trunc} the graph is solvable if and only if there are exactly $n$ different views $\cV^{n-1}(v)$ of nodes in $L$.
Hence the report ``LE impossible'' is correct when $num<n$. If LE is possible, the election rule is the same as in the weak LE algorithms. Hence all nodes elect the same
leader in this case.
\end{proof}

Our next result shows that {Algorithm~\ref{alg:SLEknownSize}} is optimal if only $n$ is known. Compared to Theorem \ref{weak-known-size} 
it shows that the optimal time of 
strong LE with known size can be exponentially slower than that of weak  LE with known size. Indeed, it shows that strong LE may take time
$\Omega(n)$ on some graphs of diameter logarithmic in their size and having level of symmetry 0,  while weak LE takes time $O(\log n)$, on any solvable graph of diameter  $O(\log n)$ and level of symmetry 0.

The high-level idea of proving that {Algorithm~\ref{alg:SLEknownSize}} is optimal if only $n$ is known is the following. For arbitrarily large $n$, we show one solvable and one non-solvable graph of size $n$, such that  there are nodes in one graph having the same view at depth $\Omega(n)$ as some nodes of the other.

\begin{theorem}\label{lower-strong}
For arbitrarily large $n$ there exist graphs $H_n$ of size $n$,  level of symmetry 0 and diameter $O(\log n)$, such that every strong LE algorithm
working for the class of graphs of size $n$ takes time $\Omega(n)$ on graph $H_n$.
\end{theorem}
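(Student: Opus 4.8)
The plan is to build two graphs of the same size $n$: a solvable graph $H_n$ and an unsolvable companion $H'_n$, so that some node of $H'_n$ has the same view at depth $\Omega(n)$ as some node of $H_n$. The key is to reuse the clique family $Q_k$ and $\widetilde Q_k$ together with Lemmas~\ref{clique} and~\ref{ring}, but now to arrange the pieces so that the level of symmetry of $H_n$ is $0$ (a node of unique degree forces solvability immediately) while the diameter stays $O(\log n)$ and the ``twin ambiguity'' propagates over a path of linear length. Concretely, I would take a chain of $\Theta(n)$ cliques $\widetilde Q_k$ with $k = \Theta(\log n)$ (so each clique has $\Theta(n)$ nodes — wait, that is already too big, so instead fix $k$ constant or of size $\Theta(\log\log n)$ so that a single $\widetilde Q_k$ has $\mathrm{poly}(\log n)$ nodes, and concatenate $\Theta(n/\mathrm{polylog}\,n)$ of them in a path-like fashion), capped at one end by a small solvable gadget containing a vertex of a degree that appears nowhere else, and capped at the other end by a long ``twin tail'' built from $\widetilde Q_k$'s in which two designated nodes $x,\overline x$ have identical views. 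The companion $H'_n$ is obtained by the standard doubling/rewiring trick (as in the construction of $M_k$ in Theorem~\ref{impos}, or $R_{D,\lambda}$ in Lemma~\ref{ring}): replace some crossing edges so that $H'_n$ consists of two symmetric halves and is therefore unsolvable, while preserving degrees and port numbers so that the views at depth $\Omega(n)$ are unchanged.

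The steps, in order, would be: (1) fix the gadget realizing level of symmetry $0$ — a node $r$ whose degree is unique in the whole graph; since views are unique at depth $0$ there, $H_n$ is solvable and $\lambda=0$. (2) Attach to $r$ a path of $\Theta(n)$ anonymous-looking clique blocks so that the only way for a node far from $r$ to ``see'' $r$ is to look to depth equal to its distance from $r$; this is where the diameter $O(\log n)$ must be checked — each clique block must have $\mathrm{polylog}(n)$ diameter-contribution, or better, I would use the cliques themselves (diameter $1$ each) connected in a cycle/path so the overall diameter is $\Theta(\log n)$ only if the number of blocks is $\mathrm{polylog}$, which conflicts with wanting $\Theta(n)$ nodes; the resolution is that each $\widetilde Q_k$ already has exponentially many nodes in $k$, so choosing $k=\Theta(\log n)$ and a constant number (or $O(1)$, or $O(\log n)$) of blocks gives $n$ nodes with diameter $O(\log n)$. (3) Invoke Lemmas~\ref{clique} and~\ref{ring} to argue that inside the $\widetilde Q_k$ region two twin nodes $x,\overline x$ have identical views up to depth $\Omega(k) = \Omega(\log n)$ — but I actually need depth $\Omega(n)$, so the twins must be separated from the ``symmetry breaker'' $r$ by a path of length $\Omega(n)$, which is exactly what forces the diameter to be large unless the path is realized through cliques with huge internal vertex sets. (4) Build $H'_n$ by the mirror-and-rewire construction and verify it has the same size and that the view of $x$ in $H_n$ at depth $\Omega(n)$ equals the view of $x$ in $H'_n$ at that depth. (5) Conclude: any strong LE algorithm running in time $o(n)$ cannot distinguish $H_n$ from $H'_n$ at node $x$, so on the solvable graph $H_n$ it would behave as it must on the unsolvable $H'_n$, i.e.\ it cannot correctly both elect on $H_n$ and declare impossibility on $H'_n$; hence it must run $\Omega(n)$ rounds on $H_n$.

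The main obstacle — and the point I would spend the most care on — is reconciling the three requirements \emph{size $n$}, \emph{diameter $O(\log n)$}, and \emph{the twin ambiguity persisting to depth $\Omega(n)$}. A linear-length ambiguity normally demands a linear-diameter path (as in $R_{D,\lambda}$), so the trick must be that the ambiguity lives not along a long \emph{geodesic} but along a long \emph{port-labeled path} that winds through a small-diameter expander-like arrangement of cliques: the view of $x$ is a tree of port-sequences of length $t$, and two nodes can have equal views to depth $t$ even though $t \gg D$, precisely because of the port-relabeling ambiguities discussed in the introduction. So I would construct the cliques $\widetilde Q_k$ so that the partition refinement $\Pi_0,\Pi_1,\dots$ stabilizes only after $\Omega(n)$ steps (equivalently $\Lambda = \Omega(n)$), even though the diameter is $O(\log n)$ — this is exactly the kind of separation between $\Lambda$ and $D$ that the $Q_k$ family was designed to exhibit (Lemma~\ref{clique} gives level of symmetry $k$ for a clique of $2^k$ nodes, i.e.\ $\lambda = \Theta(\log n)$ with $D=1$; iterating/chaining this to get $\Lambda = \Theta(n)$ with $D = O(\log n)$ is the crux). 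Verifying that the mirrored graph $H'_n$ genuinely shares views to depth $\Omega(n)$ with $H_n$, and that $H'_n$ has no unique view (unsolvable) while $H_n$ does, will be the routine-but-delicate bookkeeping, handled exactly as in the proofs of Theorem~\ref{impos} and Lemma~\ref{ring}.
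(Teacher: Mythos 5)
Your proposal has the roles of the two graphs reversed, and this leads to a genuine gap at exactly the point you yourself call ``the crux.'' In the paper's proof, $H_n$ --- the graph that must satisfy size $n$, diameter $O(\log n)$ and level of symmetry $0$, and on which the $\Omega(n)$ bound is shown --- is the \emph{non-solvable} graph: a ring of $(2^k\cdot 5)-4$ nodes with recursively placed chords making every node of degree $3$ and all views identical (so $\lambda=0$ holds trivially, since already at depth $0$ each node's class equals its full view-class, and the chords give diameter $O(k)=O(\log n)$). The $\Omega(n)$ cost is the cost of \emph{safely declaring ``LE impossible''} on this graph. The solvable companion $G_k'$ (a near-ring of the same size with one pendant, hence unique-degree, node) is allowed to have \emph{linear} diameter, because the algorithm only knows $n$ and gets no diameter promise; its nodes at distance $\Theta(2^k)=\Theta(n)$ from the pendant node have the same view up to depth $\Theta(n)$ as the nodes of $H_n$, so an algorithm that decides on $H_n$ in $o(n)$ rounds would wrongly declare impossibility on the solvable $G_k'$. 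The linear-depth cross-graph ambiguity is bought precisely by letting the solvable graph be ``long,'' which your plan forbids.

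By insisting instead that $H_n$ be solvable with a unique-degree node $r$, you force $\lambda=0$ and $D=O(\log n)$ \emph{in the solvable graph}, and then Proposition~\ref{Lambda} gives $\Lambda\le D+\lambda=O(\log n)$: your stated goal of ``iterating/chaining to get $\Lambda=\Theta(n)$ with $D=O(\log n)$'' is flatly impossible, and in-graph twins $x,\overline x$ with identical views to depth $\Omega(n)$ cannot exist in such a graph (Hendrickx's bound quoted in Section~\ref{sec:preliminaries} makes the same point). The cross-graph variant ($x$ in $H_n$ versus a node of an unsolvable mirror $H'_n$) is not obstructed in the same way, but you never construct it: the $Q_k$/$\widetilde Q_k$ cliques give ambiguity depth only logarithmic in their number of nodes (Lemma~\ref{clique}), and stretching that to depth $\Omega(n)$ via chaining is exactly the $R_{D,\lambda}$ mechanism, which needs linear diameter --- so the machinery you invoke cannot deliver the bound, and your sketch correspondingly never settles on a consistent choice of block size or block count. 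The missing idea is the paper's asymmetry: put the small diameter and level of symmetry $0$ on the \emph{unsolvable} graph (where $\lambda=0$ is automatic), and hide the symmetry-breaking evidence of the \emph{solvable} partner at distance $\Theta(n)$, which is legitimate because only $n$ is known.
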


\begin{proof}
Consider the following family of graphs $G_k$, for $k \geq 2$.
The construction of graph $G_k$ starts from a ring with $(2^{k}\cdot 5) - 4$ nodes , where all edges have port numbers 0 and 1 at the endpoints. 
All nodes in $G_k$ will have degree 3 at the end of the construction. Hence all edges that do not belong to the original ring will have port number 2 at both endpoints.
 
Pick an edge $\{u,v\}$ of the ring and call $u$ its first node. Consider the whole ring as a segment $S_k$ of $(2^k \cdot 5)-3$ edges having edge $\{u,v\}$ as its first  and last edge (hence nodes $u$ and $v$ appear twice in $S_k$).

\begin{figure}
\begin{center}
\begin{tabular}{cc}
(a)&\scalebox{0.4}{
\input{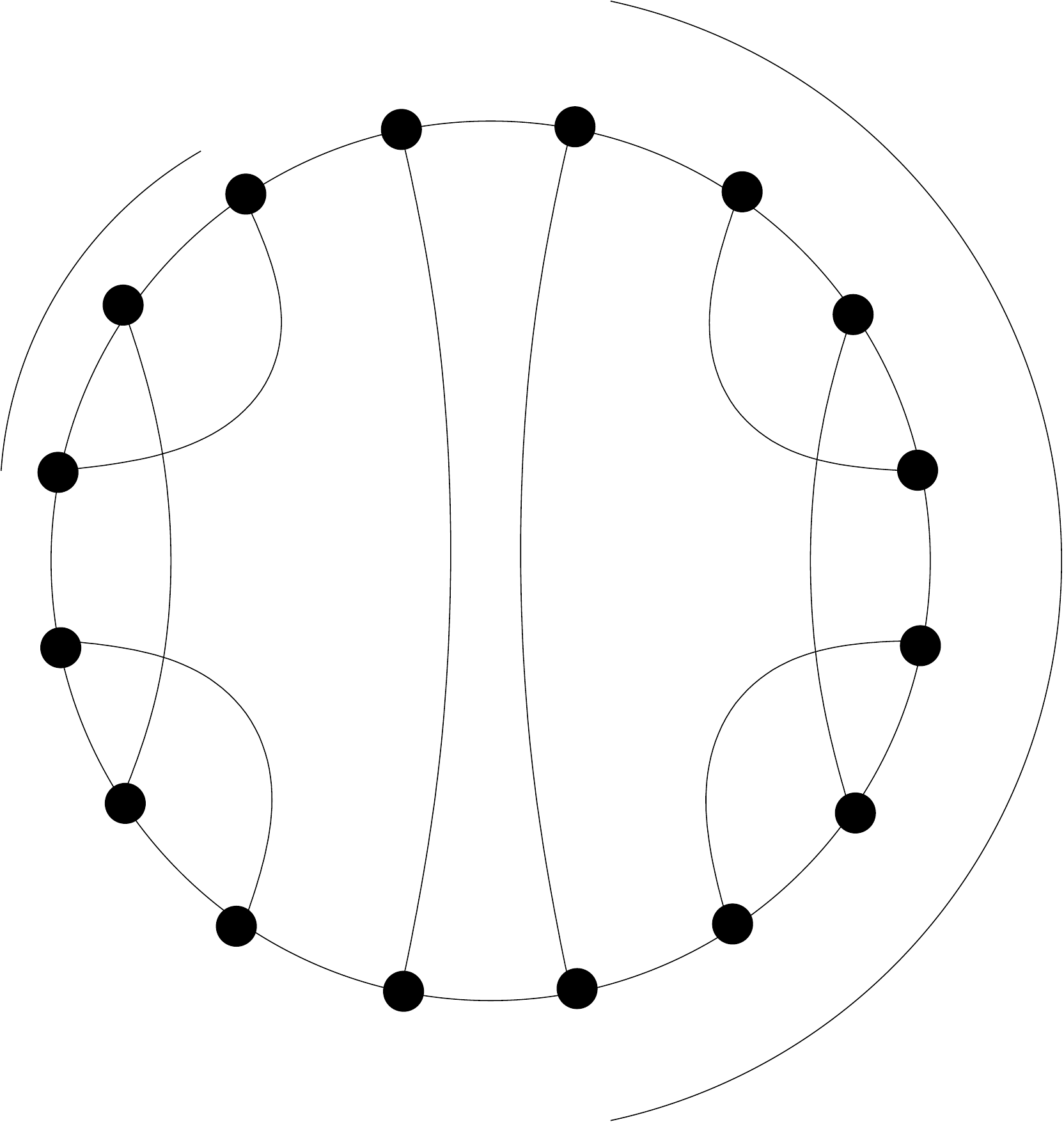_t}}\\\\\hline\\
(b)&
\scalebox{0.4}{
\input{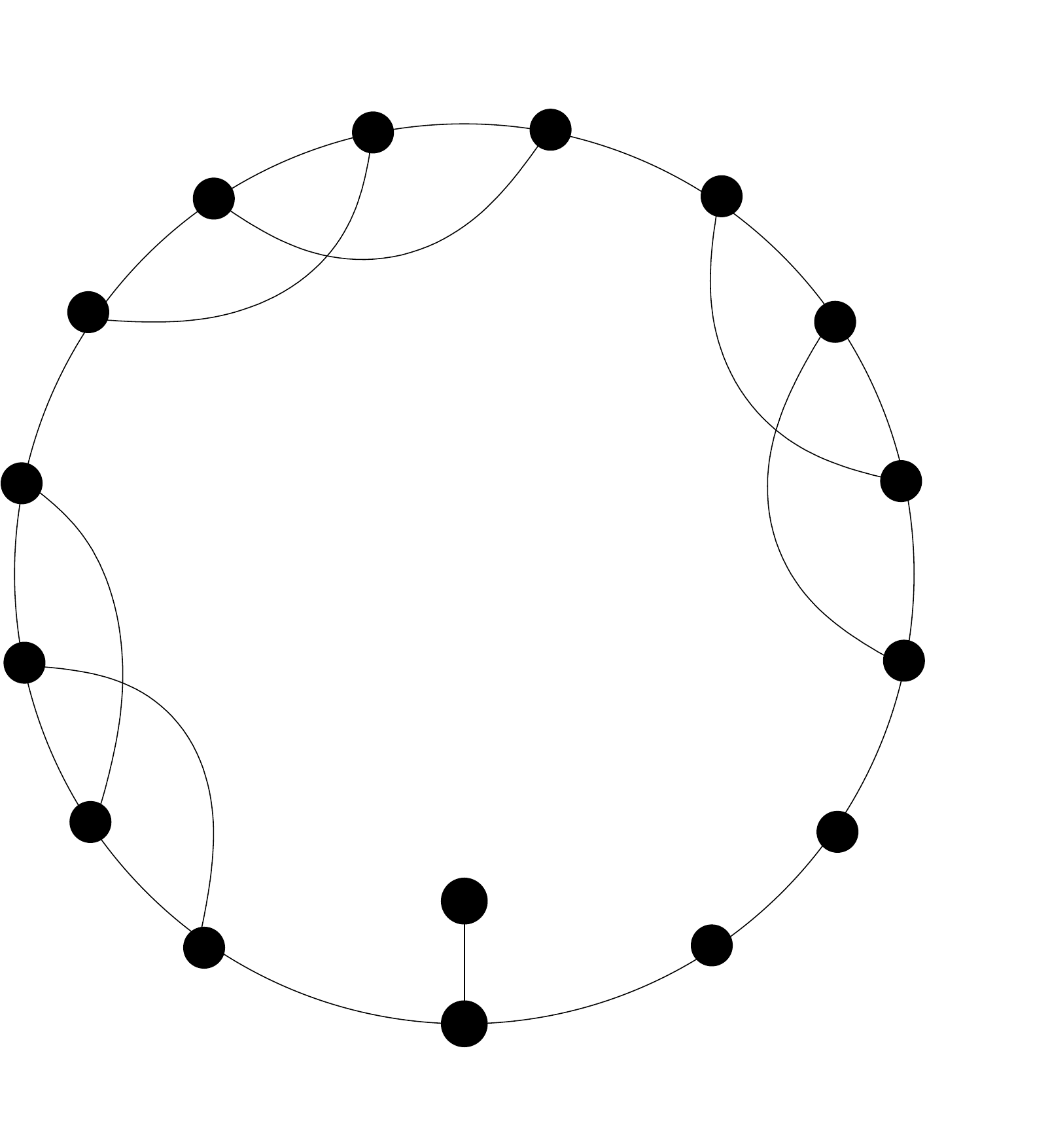_t}}\\
\end{tabular}
\caption{\label{fig.G2} a) Graph $G_2$. b) Graph $G_{2}'$.}
\end{center}
\end{figure}


Let $\{x,y\}$ be the central edge of segment $S_k$, where $x$ precedes $y$ in $S_k$.
Connect the second endpoint of the first edge of $S_k$ (i.e., node $v$) to $x$. Connect the first endpoint of the last edge of $S_k$ (i.e., node $u$) to $y$.
Let $S_{k-1} '$ be the segment obtained from $S_k$ by removing the first edge, edge  $\{x,y\}$ and all subsequent edges.
Let $S_{k-1}''$ be the segment obtained from $S_k$ by removing all edges preceding edge $\{x,y\}$ (included) and the last edge.
If $S_{k-1}'$ and $S_{k-1}''$ have length larger than 2, proceed recursively on both segments.
If $S_{k-1}'$ and $S_{k-1}''$ have length 2, connect their central nodes.
This concludes the construction of graph $G_k$ (see Fig.~\ref{fig.G2}.a for an example). For $n=(2^{k}\cdot 5) - 4$ we define $H_n=G_k$.

Since the length of a segment $S_i$ is given by $|S_{i-1}'| + |S_{i-1}''| + 3 = 2|S_{i-1}| + 3$ and $|S_0|= 2$, we get the formula $|S_i|= (2^i \cdot 5)-3$.
Hence every segment $S_i$ has a central edge if $|S_i|>2$. 
The diameter of graph $G_k$ is at most $4k+2$. Indeed, starting from node $u$ it is possible to reach any other node in $G_k$ using at most $2k + 1$ edges. 
Views of all nodes in $G_k$ are identical, hence leader election in graph $G_k$ is impossible.

Consider the family of graphs $G_{k}'$, obtained, for any $k \geq 2$, from a ring with $(2^{k}\cdot 5) - 5$ nodes, where all edges have port numbers 0 and 1 at the endpoints, as follows.
Add a new node $v$ and connect it to some node $u$ in the ring;
divide the remaining $(2^{k}\cdot 5) - 6$ nodes of the ring into $\lfloor ((2^{k}\cdot 5) - 6)/4 \rfloor$ groups of 4 consecutive nodes and connect the first node of each group with the third one of the same group, as well as the second with the fourth.
Again, new edges joining nodes of the original ring have port number 2 at both endpoints.  
Leave the remaining 2 consecutive nodes  unchanged (see Fig.~\ref{fig.G2}.b for an example).

Suppose, for contradiction, that some algorithm $A$ correctly solves strong LE for all graphs of known size $n=(2^k \cdot 5)-4$, where $k=2,3,...$, in time at most $n/5$. In particular, for any $k\ge 5$, algorithm $A$ must report  ``LE impossible'' in graph $G_k$ in time at most $2^k-1$.
However, any node $w$ at distance at least $2^k+3$ from the unique node of degree 1 in graph $G_{k}'$ has the same view, up to depth $2^k-1$, as nodes in $G_k$. Thus node $w$ would report ``LE impossible'' in graph $G_{k}'$ and algorithm $A$ would fail to elect a leader in the solvable graph $G_{k}'$, contradicting the assumption that $A$ solves strong LE for all graphs of known size $n$, for $n=(2^k \cdot 5)-4$. This contradiction implies that every algorithm solving strong LE with known size $n$ must take time $\Omega(n)$ on graphs $H_n$ which have diameter $O(\log n)$ and level of symmetry 0.
\end{proof}

\begin{algorithm*}

\caption{SLE-known-size-and-diameter$(n,D)$\label{alg:SLEknownSizeAndDiameter}}

{\bf for} $i:=0$ {\bf to} $D-1$ {\bf do} $COM(i)$\\
compute $|\Pi_0|$; $j:=0$\\
{\bf repeat}\\
\hspace*{1cm}$COM(D+j)$; $j:=j+1$; construct $\Pi_j$\\
{\bf until} $|\Pi_j| = |\Pi_{j-1}|$\\
{\bf if} $|\Pi_j|<n$ {\bf then} report ``LE impossible''\\
{\bf else}\\
\hspace*{1cm}$V:=$ the set of nodes $v$ in $\cV^{D+j}(u)$ having the smallest $\cV^{j-1}(v)$\\
\hspace*{1cm}elect as leader the node in $V$ having the lexicographically smallest path from $u$

\end{algorithm*}

{We finally show that if both $D$ and $n$ are known, then the optimal time of strong LE is $\Theta(D+\lambda)$, for graphs with level of symmetry $\lambda$.
The upper bound is given by Algorithm~\ref{alg:SLEknownSizeAndDiameter} - SLE-known-size-and-diameter $(n,D)$.
The algorithm is a variation of Algorithm~\ref{alg:knownDiameter} -  WLE-known-diameter$(D)$ - with an added test on the size of the partition $\Pi_j$ after exiting the ``repeat'' loop.}.
%
%
%
%
%
%


The following result says that Algorithm SLE-known-size-and-diameter$(n,D)$ is fast. In fact, compared to Theorems \ref{impos} and \ref{lower-strong}, 
it shows that while knowledge of the diameter alone does not help to accomplish strong LE, when this knowledge is added to the knowledge of the size,
it may exponentially decrease the time of strong LE.

\begin{theorem}\label{all}
Algorithm  SLE-known-size-and-diameter\-$(n,D)$ performs strong LE in the class of graphs of size $n$ and diameter $D$, in time $O(D+\lambda)$,
for graphs with level of symmetry $\lambda$.
\end{theorem}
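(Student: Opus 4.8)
The plan is to show two things: correctness (the algorithm reports ``LE impossible'' exactly when the graph is unsolvable, and otherwise all nodes elect the same leader) and the running time bound $O(D+\lambda)$. Both will follow from the same structural facts already established, namely that the ``repeat'' loop terminates in round $D+\Lambda$ with $\Lambda\le D+\lambda$ (Propositions \ref{Lambda} and \ref{stopBySize}), plus the observation that $|\Pi_\Lambda|=\sigma^{-1}n$, where $\sigma$ is the common size of the equivalence classes of $\sim$. The key point distinguishing this algorithm from Algorithm~\ref{alg:knownDiameter} is that now, having extra knowledge of $n$, a node can compare $|\Pi_j|$ against $n$ and thereby detect whether $\sigma=1$, i.e., whether the graph is solvable by Proposition~\ref{possible}.

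Concretely, I would first argue that after $D$ rounds every node knows $|\Pi_0|$, and then in each subsequent round $D+j$ every node can compute $|\Pi_j|$ by counting the distinct views at depth $j$ appearing in $\cV^{D+j}(u)$; this is legitimate because $\cV^{D+j}(u)$ contains $\cV^j(v)$ for every node $v$ once $j\ge 0$ (a path of length at most $D$ reaches every node, then $j$ more steps give its depth-$j$ view). Hence the loop exits precisely when $|\Pi_j|=|\Pi_{j-1}|$, which by Proposition~\ref{stopBySize} means $\Pi_j=\Pi$, so $j=\Lambda$ at exit and $|\Pi_\Lambda|=|\Pi|=n/\sigma$. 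All nodes reach this point simultaneously in round $D+\Lambda$. Since the classes of $\Pi$ all have size $\sigma$, we have $|\Pi|=n$ iff $\sigma=1$ iff all views are distinct iff, by Proposition~\ref{possible}, the graph is solvable. Therefore the test ``$|\Pi_j|<n$'' correctly identifies unsolvable graphs, and the report is issued simultaneously by all nodes.

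If $|\Pi_\Lambda|=n$, the graph is solvable, so every class of $\Pi_\Lambda$ is a singleton; then the set $V$ of nodes $v$ in $\cV^{D+\Lambda}(u)$ with the smallest $\cV^{\Lambda-1}(v)$ — wait, the algorithm uses $\cV^{j-1}(v)$ with $j=\Lambda+1$ at exit, i.e. $\cV^{\Lambda}(v)$; in any case, at that depth views are unique, so $V$ picks out a single node of the graph, identical for all $u$, and breaking ties by lexicographically smallest path is well defined and yields the same leader for everyone. This is exactly the argument in the proof of Theorem~\ref{weak-known-diam}. Finally, the time bound: the algorithm stops in round $D+\Lambda$, and $\Lambda\le D+\lambda$ by Proposition~\ref{Lambda}, so the running time is at most $2D+\lambda = O(D+\lambda)$.

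I do not expect a serious obstacle here; the theorem is essentially a combination of the solvability characterization (Proposition~\ref{possible}), the stabilization-by-size fact (Proposition~\ref{stopBySize}), and the bound $\Lambda\le D+\lambda$ (Proposition~\ref{Lambda}), all already available. The one point requiring mild care is the simultaneity of termination and the consistency of the ``impossible'' report — I would make explicit that every node computes the same sequence $|\Pi_0|,|\Pi_1|,\dots$ because each $|\Pi_j|$ is a global quantity recoverable from any node's depth-$(D+j)$ view, so all nodes exit the loop in the same round and make the same decision. A secondary bookkeeping point is to confirm that the index $j$ at loop exit equals $\Lambda+1$ but the relevant partition whose size is tested, $\Pi_j$ with the post-increment convention of the pseudocode, still satisfies $|\Pi_j|=|\Pi|=n/\sigma$, since $\Pi_{\Lambda}=\Pi_{\Lambda+1}=\Pi$; this is the mildly fiddly part but is purely a matter of matching the pseudocode's indexing to the mathematics.
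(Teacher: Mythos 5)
Your proposal is correct and follows essentially the same route as the paper's proof: it combines Propositions~\ref{possible}, \ref{stopBySize} (the paper cites \ref{stop}), and \ref{Lambda} with the analysis of Algorithm~\ref{alg:knownDiameter}, using the equal class size $\sigma$ to justify that the test $|\Pi_j|<n$ detects unsolvability. The extra details you spell out (how $|\Pi_j|$ is recovered from $\cV^{D+j}(u)$, simultaneity of termination, and the loop-index bookkeeping) are exactly what the paper delegates to the analysis of Theorem~\ref{weak-known-diam}, so there is no substantive difference.
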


\begin{proof}
The time complexity $O(D+\lambda)$ of Algorithm SLE-known-size-and-diameter$(n,D)$ on solvable graphs follows directly from the analysis of  {Algorithm~\ref{alg:knownDiameter}}.
The test on the number of sets in partition $\Pi_j$ correctly identifies solvable and non-solvable graphs. Indeed, in view of Proposition~\ref{stop}, $\Pi_j=\Pi$ and 
it is a direct consequence of Proposition~\ref{possible} that leader election is possible on an $n$-node graph, if and only if, $\Pi$ contains $n$ sets.
Proposition~\ref{Lambda} guarantees completion time $O(D+\lambda)$ on non-solvable graphs, while uniqueness of the elected leader on solvable graphs follows from the analysis of {Algorithm~\ref{alg:knownDiameter}}.
This concludes the proof.
\end{proof}

Since the lower bound in Theorem \ref{lower-weak} was formulated for known $n$, $D$ and $\lambda$, it implies a matching lower bound for the optimal time of strong LE
with known $n$ and $D$, showing that this time is indeed $\Theta(D+\lambda)$ for graphs with level of symmetry $\lambda$. 

\section{Conclusion}

We established the optimal time of weak and strong leader election, depending on the knowledge of the size $n$ and of the diameter $D$ of the graph.
For each scenario the upper bounds were shown by proposing an algorithm, and matching lower bounds were proved. The optimal time turned out to depend
on the knowledge of one or both of the parameters $n$ and $D$. 

Notice that the comparison of assumptions for our matching upper and lower bounds in various scenarios of weak and strong leader election shows that,
while the level of symmetry may significantly influence optimal election time, the {\em knowledge} of this level is not important. Indeed, for each task
and for various combinations of knowledge of the size $n$ and/or the diameter $D$, adding the knowledge of the symmetry level $\lambda$ does not help.
More precisely, weak LE is impossible knowing neither $n$  nor $D$, even if knowledge of $\lambda$, e.g., $\lambda=0$ is added, cf. \cite{YK3}. On the other hand,
Theorems \ref{weak-known-diam}, \ref{weak-known-size}, and \ref{lower-weak} show that with known $D$ or $n$, optimal time of weak LE is 
$\Theta(D+\lambda)$, regardless of whether $\lambda$ is known or not (but the time itself depends on both $D$ and $\lambda$). The same is true for strong LE. Indeed,
the impossibility result in Theorem \ref{impos} and the lower bound from Theorem \ref{lower-strong} hold even for  known $\lambda=0$, while the upper bound 
from Theorem \ref{strong-size} does not require the knowledge of $\lambda$. For known $D$ and $n$, the upper bound $O(D+\lambda)$ holds without knowing $\lambda$
by Theorem \ref{all}, while the lower bound carried over from Theorem \ref{lower-weak} holds even when $\lambda$ is known. 

It would be interesting to investigate other complexity measures of the leader election problem, such as the bit or message complexity of communication
needed to accomplish it, in relation to the knowledge of the three parameters of the network used in our present study.

\bibliographystyle{splncs}


\end{document}